\newtheorem{theorem}{Theorem}
\newtheorem{corollary}{Corollary}
\newtheorem{proposition}{Proposition}
\newcommand{\A}{{\mathcal A}}
\renewcommand{\section}[1]{%
\bigskip
\begin{center}
\begin{Large}
\normalfont\scshape #1
\medskip
\end{Large}
\end{center}}
\renewcommand{\subsection}[1]{%
\bigskip
\begin{center}
\begin{large}
\normalfont\itshape #1
\end{large}
\end{center}}
\renewcommand{\subsubsection}[1]{%
\vspace{2ex}
\noindent
\textit{#1.}---}
\renewcommand{\tableofcontents}{}
\begin{document}

\begin{center}

\noindent{\Large \bf Which phylogenetic networks are merely trees with additional arcs?}
\bigskip

\noindent {\normalsize \sc Andrew R. Francis$^1$ and Mike Steel$^2$}\\
\noindent {\small \it 
$^1$Centre for Research in Mathematics, School of Computing, Engineering and Mathematics, University of Western Sydney, Australia;\\
$^2$Biomathematics Research Centre, Allan Wilson Centre, University of Canterbury, New Zealand}\\
\end{center}
\medskip
\noindent{\bf Corresponding author:} Mike Steel, Biomathematics Research Centre, University of Canterbury, Christchurch, 8041, Christchurch  E-mail: mike.steel@canterbury.ac.nz\\


\subsubsection{Abstract}A binary phylogenetic network may or may not be obtainable from a tree by the addition of directed edges (arcs) between tree arcs. Here,  we establish a precise and easily tested criterion (based on `2-SAT') that efficiently determines whether or not any given network can be realized in this way. Moreover, the proof provides a polynomial-time algorithm for finding one or more trees (when they exist) on which the network can be based. A number of interesting consequences are presented as corollaries;  these lead to some further relevant questions and observations, which we outline in the conclusion. \\
\noindent (Keywords: Phylogenetic network, 2-SAT, antichain, phylogenetic tree, algorithm, reticulate evolution)\\

\section{Introduction}

Starting from any rooted binary phylogenetic tree, if we sequentially add one or more arcs (directed edges), each placed from a point on one tree arc to a point on another tree arc, then provided no directed cycles  arise, we obtain a rooted binary phylogenetic network. Many classes of phylogenetic networks can be generated in this way, even if, at first, their descriptions seem somewhat different. For instance,  hybridization networks  are usually produced by adding two arcs from points on  tree arcs to meet at a new hybridization vertex, with a further arc leading to a hybrid offspring; however,  an equivalent network can be produced by starting with a phylogenetic tree and simply adding arcs just between tree arcs. 

Here, we explore a key observation due to Leo van Iersel~\citep{leo}, namely that not every binary phylogenetic network can be obtained from a tree by simply adding arcs between tree arcs.   In this paper, we provide a precise mathematical characterization of the networks that can be obtained in this manner. This in turn  allows us to readily show  that certain classes of networks are tree-based, while others are not. 
We then describe an efficient algorithm for determining whether any given network is tree-based and finding possible trees from which to build the network. We illustrate the use of this algorithm on a recent phylogenetic network concerning the complex hybrid evolution of wheat.

Informally, we say that a binary phylogenetic network is a `tree-based network'  if it can be obtained from a rooted binary phylogenetic tree by sequentially attaching arcs  between the arcs of the tree.  This concept  is relevant to the question of whether phylogenetic networks can be viewed as really just  trees with some reticulate arcs between the branches or whether some networks are inherently less tree-like, so that the concept of an `underlying tree' may be meaningless. This is particularly relevant to the ongoing debate about  whether the evolution of certain groups (e.g. prokaryotes) should be viewed as tree-like with reticulation or whether the very notion of a tree should be dispensed with \citep{dag, doo, mar}.  A network that is not tree-based cannot be described as tree-like evolution with 
directed links between the branches of the tree (at least for the taxa under study -- the existence of unsampled or extinct taxa \citep{szo13} can alter this conclusion, as we show). Conversely, a network that is tree-based can still allow for genuine reticulation events such as the formation of hybrid taxa from two ancestral lineages.

Phylogenetic networks  can be viewed as providing either an  `explicit' picture of reticulate evolution or as giving an `implicit' representations of conflict in the data ({\em c.f.} \cite{huson2010phylogenetic}, p.71). In the explicit setting, vertices having two incoming arcs correspond to hypothesised reticulate evolutionary events  such as hybrid evolution, endosymbiosis, and  lateral gene transfer  (either individual transfers, or  `highways' of lateral gene transfers \citep{ban13}).  In the  `implicit' setting, the networks are frequently unrooted, as in the popular `NeighborNet' method \citep{bry}, and the degree of reticulation is a measure of the extend to which trees constructed from different loci (`gene trees') disagree with each other, even though the evolution of the taxa may be  essentially tree-like (such networks can also help identify true reticulation when it is present \citep{hol08}).   

Conflicts between gene trees arise by well-studied random processes at the interface of population genetics and molecular evolution, such as incomplete lineage sorting, gene duplication and loss, and lateral gene transfer (see \cite{kno10} or \cite{szo15}).  In this case, there is is often assumed to be a `species tree', with non-reticulate processes (incomplete lineages sorting and gene duplication) occurring within the branches of the tree, and with the reticulate process of lateral gene transfer providing linking arcs between the branches. Provided the level of random lateral transfers is not too high it is still possible to infer a `central tendency' species tree accurately \citep{roc13, ste13}, as well as  correcting conflicting gene trees \citep{ban15}.  

 In this paper,  we are concerned with a more basic question arising for explicit phylogenetic networks  -- namely, if one has a rooted binary phylogenetic network, regardless of how this may have been obtained,  then we wish to determine whether or not it can be described as a tree with additional arcs.   As we discuss further in the conclusion, the interpretation of tree-based requires some care, as there may be other trees that equally well represent the network (so any given tree need not be a `central tendency' species tree).

The structure of this paper is as follows. We first provide a precise definition of the concept of a tree-based network, and then state our main result. After deriving a number of consequences from this result, we then show how it leads to a simple algorithm to test if a network is tree-based, and we provide a sample application.
We then discuss the delicate relationship between a network being tree-based and displaying a tree, before concluding with a number of observations and some questions.

\subsection{Definitions}
First, we make the definition of a `tree-based network' more precise. Given a set $X$ of taxa, a  \emph{binary phylogenetic network (over $X$)} refers to any directed acyclic graph $N = (V,A)$, for which:
\begin{itemize}
\item $X$ is the set of vertices that have out-degree 0 and in-degree 1 (leaves);
\item there is a unique vertex of in-degree 0, called the {\em root} (denoted $\rho$), which has out-degree 1 or 2;
\item every vertex other than $\rho$ or a leaf either has in-degree 2  and out-degree 1, or in-degree 1 and out-degree 2.
\end{itemize}

We  say that a binary phylogenetic network $N$ is a {\em tree-based network} (with base tree $T$) if
$N$ can be described as follows. First, subdivide each arc of $T$ as many times as required, and call the resulting degree-2 vertices {\em attachment points} and the resulting tree $T'$ a {\em support tree (for $N$ derived from $T$)}.  Next, sequentially place additional arcs between any two attachment points, provided that the network remains binary (i.e no  two additional arcs start or end at the same attachment point) and acyclic (i.e. no  directed cycle is created).  We call these additional arcs {\em linking arcs}. Any attachment point that is not incident with a linking arc is then suppressed.  Notice that this allows for parallel edges to be present in a tree-based network (if two attachment points are adjacent in $T'$ with a linking arc between them). 

Requiring a network $N$ to be based on a tree $T$ is a much stronger condition than just requiring that $N$  `displays' $T$.  We will  explore  the relationship between
these two concepts further in a later section. The interested reader is referred to ~\cite{huson2010phylogenetic} for general background on  phylogenetic networks.
\bigskip

Some basic observations to note at this point are as follows:
\begin{itemize}  
\item[(i)] All vertices of any network that is  based on $T$  are vertices of the support tree $T'$ (i.e. no new vertices are created, since a linking arc is not allowed to start   or end on another linking arc).
\item[(ii)] The order in which the additional arcs are attached in converting $T'$ to $N$ is not important.
\item[(iii)] A tree-based network can have different possible base trees; for example,
Fig.~\ref{fig:3tbones} shows  a binary network on $\{a,b,c\}$ that can be based on all three of the possible 3-taxon trees.	
\item[(iv)]   
Not all binary phylogenetic networks are tree-based, one example (from \cite{leo}) is shown in Fig.~\ref{fig:egs}(i) and another in Fig.~\ref{fig:egs}(iii).  
\end{itemize}

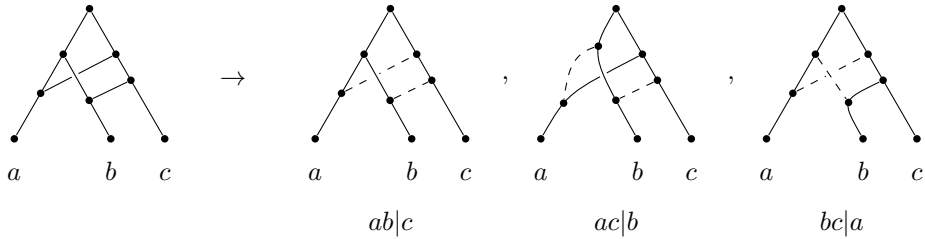
\begin{figure}
	\begin{center}
\begin{tikzpicture}
\footnotesize
\node[circle,fill,radius=2pt,inner sep=1pt](root){};
\path (root) ++(-120:2cm)   node[circle,fill,radius=2pt,inner sep=1pt] (a) {};
\path (root) ++(-120:.7cm)  node[circle,fill,radius=2pt,inner sep=1pt] (a1) {};
\path (root) ++(-120:1.3cm) node[circle,fill,radius=2pt,inner sep=1pt] (a2){}; 
\path (root) ++(-60:2cm)    node[circle,fill,radius=2pt,inner sep=1pt]  (c) {};
\path (root) ++(-60:.7cm)   node[circle,fill,radius=2pt,inner sep=1pt] (c1) {};
\path (root) ++(-60:1.1cm)  node[circle,fill,radius=2pt,inner sep=1pt] (c2){}; 
\path (a)-- node[pos=.65,circle,fill,radius=2pt,inner sep=1pt](b){} (c);
\draw(a)--(root)--(c) (a2)--(c1) ;
\draw[white,line width=1mm] (a1)--(b);
\path (a1) --  node[pos=.55,circle,fill,radius=2pt,inner sep=1pt] (b1){} (b); 
\draw (a1)--(b) (b1)--(c2);
\node[below=1mm of a]{$\mathstrut a$};
\node[below=1mm of b]{$\mathstrut b$};
\node[below=1mm of c]{$\mathstrut c$};
\node[right=1cm of c2] {$\rightarrow$};

\begin{scope}[xshift=4cm]
\node[circle,fill,radius=2pt,inner sep=1pt](root){};
\path (root) ++(-120:2cm)   node[circle,fill,radius=2pt,inner sep=1pt] (a) {};
\path (root) ++(-120:.7cm)  node[circle,fill,radius=2pt,inner sep=1pt] (a1) {};
\path (root) ++(-120:1.3cm) node[circle,fill,radius=2pt,inner sep=1pt] (a2){}; 
\path (root) ++(-60:2cm)    node[circle,fill,radius=2pt,inner sep=1pt]  (c) {};
\path (root) ++(-60:.7cm)   node[circle,fill,radius=2pt,inner sep=1pt] (c1) {};
\path (root) ++(-60:1.1cm)  node[circle,fill,radius=2pt,inner sep=1pt] (c2){}; 
\path (a)-- node[pos=.65,circle,fill,radius=2pt,inner sep=1pt](b){} (c);
\draw(a)--(root)--(c) ;
\draw[dashed] (a2)--(c1);
\draw[white,line width=1mm] (a1)--(b);
\draw (a1)--(b);
\path (a1) --  node[pos=.55,circle,fill,radius=2pt,inner sep=1pt] (b1){} (b); 
\draw[dashed]  (b1)--(c2);
\node[below=1mm of a]{$\mathstrut a$};
\node[below=1mm of b]{$\mathstrut b$};
\node[below=1mm of c]{$\mathstrut c$};
\node[below=2.5cm of root]{$ab|c$};
\node[right=.75cm of c2] {,};
\end{scope}

\begin{scope}[xshift=7cm]
\node[circle,fill,radius=2pt,inner sep=1pt](root){};
\path (root) ++(-120:2cm)   node[circle,fill,radius=2pt,inner sep=1pt] (a) {};
\path (root) ++(-120:.7cm)  node[circle,radius=2pt,inner sep=0pt] (a1) {};
\path (root) ++(-120:1.3cm) node[circle,radius=2pt,inner sep=0pt] (a2){}; 
\path (root) ++(-60:2cm)    node[circle,fill,radius=2pt,inner sep=1pt]  (c) {};
\path (root) ++(-60:.7cm)   node[circle,fill,radius=2pt,inner sep=1pt] (c1) {};
\path (root) ++(-60:1.1cm)  node[circle,fill,radius=2pt,inner sep=1pt] (c2){}; 
\path (a)-- node[pos=.65,circle,fill,radius=2pt,inner sep=1pt](b){} (c);
\draw (root)--(c) ;
\draw (c1)..controls (a2).. node[pos=.65,circle,fill,radius=2pt,inner sep=1pt](a2'){} (a);
\draw[white,line width=1mm] (root)..controls (a1).. (b);
\draw (root)..controls (a1).. node[pos=.35,circle,fill,radius=2pt,inner sep=1pt](b1){} node[pos=.83,circle,fill,radius=2pt,inner sep=1pt](b2){}(b);
\draw[dashed] (c2)--(b2);
\node[left=1mm of a1](a1v){};
\node[above=1mm of a2](a2v){};
\draw[dashed] (b1)..controls (a1v) and (a2v) .. (a2');\node[left=1mm of a1](a1v){};
\node[below=1mm of a]{$\mathstrut a$};
\node[below=1mm of b]{$\mathstrut b$};
\node[below=1mm of c]{$\mathstrut c$};
\node[below=2.5cm of root]{$ac|b$};
\node[right=.75cm of c2] {,};
\end{scope}

\begin{scope}[xshift=10cm]
\node[circle,fill,radius=2pt,inner sep=1pt](root){};
\path (root) ++(-120:2cm)   node[circle,fill,radius=2pt,inner sep=1pt] (a) {};
\path (root) ++(-120:.7cm)  node[circle,fill,radius=2pt,inner sep=1pt] (a1) {};
\path (root) ++(-120:1.3cm) node[circle,fill,radius=2pt,inner sep=1pt] (a2){}; 
\path (root) ++(-60:2cm)    node[circle,fill,radius=2pt,inner sep=1pt]  (c) {};
\path (root) ++(-60:.7cm)   node[circle,fill,radius=2pt,inner sep=1pt] (c1) {};
\path (root) ++(-60:1.1cm)  node[circle,fill,radius=2pt,inner sep=1pt] (c2){}; 
\path (a)-- node[pos=.65,circle,fill,radius=2pt,inner sep=1pt](b){} (c);
\draw(a)--(root)--(c) ;
\path (a1) --  node[pos=.55,circle,radius=2pt,inner sep=0pt] (b1){} (b); 
\draw(c2)..controls(b1)..node[pos=.5,circle,fill,radius=2pt,inner sep=1pt] (b1'){}(b);
\draw[dashed] (a2)--(c1);
\draw[white,line width=1mm] (a1)--(b1');
\draw[dashed] (a1)--(b1');
\node[below=1mm of a]{$\mathstrut a$};
\node[below=1mm of b]{$\mathstrut b$};
\node[below=1mm of c]{$\mathstrut c$};
\node[below=2.5cm of root]{$bc|a$};
\end{scope}

\end{tikzpicture}
	
	 \caption{A tree-based network on three leaves in which all possible trees on three leaves could be the base.}
	 \label{fig:3tbones}
	 \end{center}
\end{figure}

In contrast to this last point, several classes of networks are tree-based.  Clearly, horizontal gene transfer networks define one such class (since there is a canonical tree associated with each such network which contains every vertex of the network \citep{fs2015tree})  but so, too, are tree-child networks, as noted
by \cite{leo} (see Corollary~\ref{treechild} below).  Since any hybridization network is also a tree-child network,  it follows that every hybridization network is tree-based (as noted above).   
Our goal here is to characterize when a binary phylogenetic network is tree-based, and provide criteria for deciding whether a given network is tree-based, along with an algorithm to determine this. We also explore the subtle relationship between a network being based on a tree, and the weaker but more widely-known notion of the network displaying a tree. 

\begin{figure}
\begin{center}
\begin{tikzpicture}\footnotesize
\node[circle,fill,radius=2pt,inner sep=1pt](root){};
\node[below left=2cm of root,circle,fill,radius=2pt,inner sep=1pt](u){};
\node[below right=2cm of root,circle,fill,radius=2pt,inner sep=1pt](w){};
\node[below left=1.2cm of root,circle,fill,radius=2pt,inner sep=1pt](a1){};
\node[below right=.7cm of a1,circle,fill,radius=2pt,inner sep=1pt](v){};
\draw(u)--(root)--(w) (a1)--(v);
\node[below=1cm of u,circle,fill,radius=2pt,inner sep=1pt](b1){};
\node[below=1cm of v,circle,fill,radius=2pt,inner sep=1pt](b2){};
\draw(u)--(b1) (v)--(b2) (u)--(b2);
\draw[line width=5pt,white] (v)--(b1);
\draw (v)--(b1);

\node[below=1cm of b1,circle,fill,radius=2pt,inner sep=1pt](c1){};
\node[below=1cm of b2,circle,fill,radius=2pt,inner sep=1pt](c2){};
\node[below=1cm of c1,circle,fill,radius=2pt,inner sep=1pt](d1){}; 
\node[below=1cm of c2,circle,fill,radius=2pt,inner sep=1pt](d2){}; 

\draw (b1)--(d1) (b2)--(d2);
\draw[line width=5pt,white] (w) to[out=-100,in=45] (c1);
\draw (w) to[out=-100,in=45] (c1);
\draw (w) to[out=-90,in=45] (c2);

\node[below=.5mm of d1] () {$\mathstrut a$};
\node[below=.5mm of d2] () {$\mathstrut b$};
\node[left=.5mm of u] () {$\mathstrut u$};
\node[right=.5mm of v] () {$\mathstrut v$};
\node[right=.5mm of w] () {$\mathstrut w$};
\node[below=1cm of d2] (i) {(i)};
\node[right=3cm of i] (ii) {(ii)};
\node[right=3cm of ii] (iii) {(iii)};

\begin{scope}[xshift=3.5cm,yshift=-2cm]
\node[circle,fill,radius=2pt,inner sep=1pt](root){};
\node[above=1mm of root](){$\rho$};
\path (root) ++(-120:2cm) node[circle,fill,radius=2pt,inner sep=1pt]  (u) {};
\path (root) ++(-60:2cm) node[circle,fill,radius=2pt,inner sep=1pt]  (w) {};

\path (root) ++(-120:.8cm) node[circle,fill,radius=2pt,inner sep=1pt]  (u1) {}; 
\path (root) ++(-120:.8cm) node[left](){$u$};
\path (root) ++(-120:1.5cm) node[circle,fill,radius=2pt,inner sep=1pt]  (u2) {};  
\path (root) ++(-120:1.5cm) node[left](){$v$};
\path (root) ++(-60:1cm) node[circle,fill,radius=2pt,inner sep=1pt]  (v1) {};  
\path (root) ++(-60:1cm) node[right](){$w$};
\draw(u)--(root)--(w) (v1)--node[pos=.5,circle,fill,radius=2pt,inner sep=1pt](mid){} node[pos=.45,below](){$x$} (u2);
\draw(mid)--(u1);
\node[below=.5mm of u] () {$\mathstrut a$};
\node[below=.5mm of w] () {$\mathstrut b$};

\end{scope}

\begin{scope}[xshift=7cm,yshift=-1.5cm]
\node[circle,fill,radius=2pt,inner sep=1pt](root){};
\path (root) ++(-120:3cm) node[circle,fill,radius=2pt,inner sep=1pt]  (u) {};
\path (root) ++(-120:1cm) node[circle,fill,radius=2pt,inner sep=1pt]  (u1) {};
\path (root) ++(-120:1.7cm) node[circle,fill,radius=2pt,inner sep=1pt]  (u2) {};
\path (root) ++(-60:3cm) node[circle,fill,radius=2pt,inner sep=1pt]  (v) {};
\path (root) ++(-60:1cm) node[circle,fill,radius=2pt,inner sep=1pt]  (v1) {};
\path (root) ++(-60:1.7cm) node[circle,fill,radius=2pt,inner sep=1pt]  (v2) {};
\draw[white](u)--node[pos=.5,circle,fill,black,radius=2pt,inner sep=1pt](b){} (v);
\node[above=5mm of b,circle,fill,radius=2pt,inner sep=1pt] (b1){};

\draw(u)--(root)--(v);
\draw(b1)--(b);
\draw(u2)--node[pos=.5,circle,fill,black,radius=2pt,inner sep=1pt] (x){} (b1)--node[pos=.5,circle,fill,black,radius=2pt,inner sep=1pt] (y){} (v2);
\draw(v1)--node[pos=.4,circle,fill,black,radius=2pt,inner sep=1pt] (x1){} node[pos=.6,circle,fill,black,radius=2pt,inner sep=1pt] (x2){} (x);
\draw(u1)--(x2) (x1)--(y);
\node[below=.5mm of u] () {$\mathstrut a$};
\node[below=.5mm of v] () {$\mathstrut c$};
\node[below=.5mm of b] () {$\mathstrut b$};

\end{scope}
\end{tikzpicture}
\caption{Some pertinent examples of binary networks.  Example (i) is from \cite{leo}. Note that, despite first appearances, (ii) is tree-based via the tree arcs $(u,x)$ and $(x,v)$ (so that $(u,v)$ and $(w,x)$ are linking arcs).  Neither (i) nor (iii) are tree-based, as can easily be checked by Proposition~\ref{mainpro2}, since (i) has an antichain consisting of three vertices but only two leaves, while (iii) has an antichain of four vertices but only three leaves.}
\label{fig:egs}
\end{center}
\end{figure}
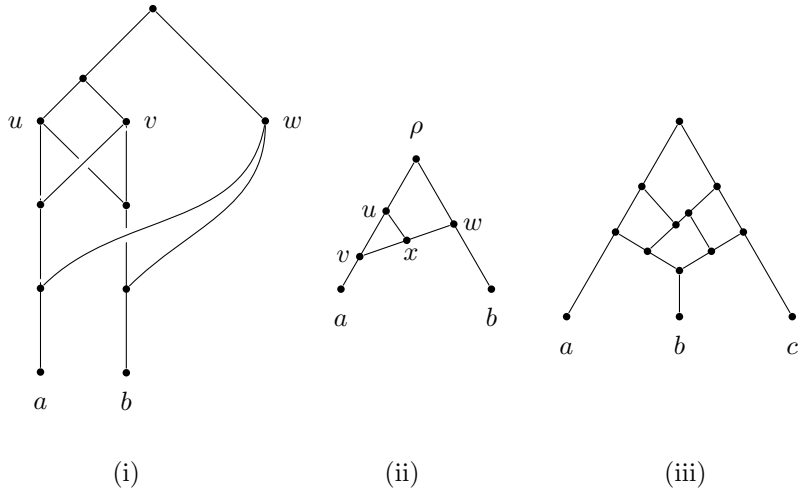

\section{The main theorem}

Our main theoretical result can be stated informally as follows. The question of whether or not a binary network is tree-based can be re-stated as an equivalent question in propositional logic called `2-SAT', and which is easily solved.  To make this precise we introduce some additional notation. 

Let $N= (V,A)$ be a rooted binary phylogenetic network on leaf set $X$.  
For an arc $a=(u,v) \in A$, 
we say that $u$ is the {\em source} and $v$ the {\em target} of $a$, and that $a$ is an {\em incoming arc} of $v$ and an {\em outgoing arc} of $v$.
Let $S_1$ be the subset of arcs in $A$ whose source has out-degree 1 or whose target has in-degree 1.
We say that a subset $S$ of $A$ is {\em admissible} if $S$ contains $S_1$ and satisfies the following two constraints for every $v \in V$:
\begin{itemize}
\item[($C_1$)] If $v$ has in-degree 2  then exactly one of its incoming arcs is in $S$.
\item[($C_2$)] If $v$ has out-degree 2 then at least one of its outgoing arcs is in $S$.
\end{itemize}

The problem 2-SAT is a classic and easily solved problem in logic to determine whether a conjunction of clauses each involving just two literals (or their negation) has a satisfying assignment.
For example,  suppose that,  in  a court case, witnesses have stated the following three opinions as to who may or may not  have been involved in a crime:  `{\em Peter or Susan}',  `{\em John or not Peter}', and `{\em not John or not Susan}'. As an instance of 2-SAT, the satisfiability question asks whether these three witness statements could all be correct.   In this case they can, for example, if John and Peter were involved in the crime but Susan was not.
With these concepts in hand, we can now state the main result, the proof of which is given in the Appendix. 

\begin{theorem}
\label{main}
\mbox{ } 
\begin{itemize}
\item[(a)]
A rooted binary phylogenetic network $N=(V,A)$  is tree-based if and only if there exists an admissible subset $S$ of $A$. In this case $S$ forms the arcs of a valid support tree for $N$ and the arcs in $A-S$ are
linking arcs. Moreover, there is a bijection between the set of admissible subsets $S$ of $A$ and the set of valid support trees for $N$.
\item[(b)] Determining whether $N$ is tree-based can be restated as a question of whether a particular instance of 2-SAT has a satisfying assignment, and this can be solved
in polynomial (linear) time. 
\end{itemize}
\end{theorem}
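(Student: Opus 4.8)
The plan is to translate the combinatorial definition of a tree-based network into the arc-set language of admissibility through a single dictionary: an arc $a\in A$ should lie in $S$ exactly when it is a (subdivided) arc of the base tree, and $a\in A-S$ exactly when $a$ is a linking arc. The whole argument then rests on the local degree bookkeeping forced by the construction. Attaching a linking arc raises the out-degree of its source from $1$ to $2$ and the in-degree of its target from $1$ to $2$; consequently a linking arc can never have a source of out-degree $1$ nor a target of in-degree $1$, so every arc in $S_1$ is necessarily a tree arc, which is exactly the requirement $S_1\subseteq S$. Likewise, a reticulation (in-degree $2$) of $N$ is precisely a subdivision vertex that received one linking arc, so exactly one of its incoming arcs is a tree arc (this is $(C_1)$), while an out-degree-$2$ vertex is either an original branch vertex of the base tree (both out-arcs are tree arcs) or a subdivision vertex emitting one linking arc (one out-arc is a tree arc), so at least one of its outgoing arcs is a tree arc (this is $(C_2)$).

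For the forward implication of (a) I would take any support tree $T'$ for $N$, set $S$ equal to its arc set, and read off $S_1\subseteq S$, $(C_1)$ and $(C_2)$ directly from the degree analysis just described; this shows $S$ is admissible and that $A-S$ is exactly the set of linking arcs.

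The converse is the heart of the matter and the step I expect to be the main obstacle: starting from an arbitrary admissible $S$, I must manufacture a genuine support tree and verify that reinstating $A-S$ returns exactly $N$. First I would count in-degrees within $(V,S)$: every leaf and every in-degree-$1$ vertex keeps its unique incoming arc (which lies in $S_1$), and every in-degree-$2$ vertex keeps exactly one incoming arc by $(C_1)$, so $\rho$ has in-degree $0$ and all other vertices have in-degree $1$ in $(V,S)$. Since $(V,S)$ inherits acyclicity from $N$, tracing parents from any vertex must terminate at the unique in-degree-$0$ vertex $\rho$, so $(V,S)$ is a spanning rooted tree. A dual out-degree count---using that out-degree-$1$ arcs lie in $S_1$ and that $(C_2)$ puts at least one out-arc of each out-degree-$2$ vertex into $S$---shows that the out-degree-$0$ vertices of $(V,S)$ are exactly the leaves $X$ and that every remaining vertex has out-degree $1$ or $2$; suppressing the degree-$2$ (in-degree $1$, out-degree $1$) vertices then yields a rooted binary phylogenetic tree $T$ of which $(V,S)$ is a subdivision, i.e.\ a valid support tree. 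Finally, for each $a=(u,v)\in A-S$ the fact that $a\notin S_1$ forces $u$ to have out-degree $2$ and $v$ in-degree $2$, and then $(C_1)$, $(C_2)$ together with the network's degree constraints identify $u$ and $v$ as degree-$2$ attachment points of $(V,S)$ and guarantee that exactly one linking arc leaves $u$ and exactly one enters $v$; reinstating all of $A-S$ is therefore legal (the network stays binary) and introduces no cycle (it is acyclic because $N$ is), recovering $N$. The one place that needs separate, careful checking is the behavior at the root $\rho$, especially when $\rho$ has out-degree $2$, where one must confirm that the endpoints of the reinstated arcs are genuine attachment points.

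The bijection in (a) is then immediate: the maps $S\mapsto(V,S)$ and $T'\mapsto(\text{arc set of }T')$ are well defined by the two implications above and are mutually inverse, because by observation~(i) any support tree shares all of its vertices with $N$ and has its arc set contained in $A$, hence is determined by that arc set. For part (b) I would introduce one Boolean variable $x_a$ for each arc $a\in A$, with the reading that $x_a$ is true precisely when $a\in S$. Then the requirement $S_1\subseteq S$ becomes a unit clause $(x_a)$ for each $a\in S_1$; condition $(C_1)$ at a reticulation with incoming arcs $a,b$ becomes $(x_a\vee x_b)\wedge(\neg x_a\vee\neg x_b)$; and condition $(C_2)$ at an out-degree-$2$ vertex with outgoing arcs $a,b$ becomes $(x_a\vee x_b)$. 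Every clause has at most two literals, so by part (a) the admissible subsets of $A$ are exactly the satisfying assignments of this 2-SAT instance, and $N$ is tree-based iff the instance is satisfiable. Since a binary phylogenetic network has $|A|=O(|V|)$, the formula has $O(|V|)$ variables and clauses, and as 2-SAT is decidable in time linear in the size of the formula, tree-basedness is decidable in linear time, which gives (b).
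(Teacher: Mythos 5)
Your proposal is correct and follows essentially the same route as the paper: the forward direction via the same degree bookkeeping, the converse by showing $(V,S)$ is a spanning rooted tree with leaf set $X$ (your in-degree/out-degree counts are a slightly more explicit version of the paper's argument, and you additionally verify that reinstating $A-S$ as linking arcs is legal, which the paper leaves implicit), the same bijection $S\mapsto(V,S)$, and an identical 2-SAT encoding with unit clauses for $S_1$, the pair $(x_a\vee x_b)\wedge(\neg x_a\vee\neg x_b)$ at reticulations, and $(x_a\vee x_b)$ at out-degree-2 vertices. No substantive differences or gaps.
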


An immediate consequence of Part (a) of this theorem is that it \emph{is} the case that any non-tree-based rooted binary phylogenetic network can be expanded to become tree-based by the addition of extra arcs and leaves, as the next corollary shows. This is relevant in biology because these additional leaves may represent taxa that have become extinct in the past, and so can not be sampled today \citep{fou09, szo13}, or are still extant today but have not been included in the sample of taxa under study. In other words, any binary phylogenetic network can be realized as a tree with additional linking arcs, provided that one allows additional `unseen' taxa in the past to play a certain role in the evolution of the taxa sampled today.

\begin{corollary}
For any binary phylogenetic network $N$ over leaf set $X$ there exists a tree-based phylogenetic network $N^+$ over a leaf set $X^+$ that contains $X$ for which $N = N^+|X$.
\end{corollary}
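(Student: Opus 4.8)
The plan is to invoke Theorem~\ref{main}(a) and to build $N^+$ from $N$ by the least invasive augmentation that removes every obstruction to admissibility, namely by giving each reticulation a private attachment point from which a linking arc may safely emanate. Concretely, I would process each vertex $v$ of $N$ of in-degree $2$ as follows: choose one of its two incoming arcs, say $(p,v)$, subdivide it by a new degree-$2$ vertex $m_v$ (replacing $(p,v)$ by the pair $(p,m_v)$ and $(m_v,v)$), and then attach a new pendant leaf $\ell_v$ via a new arc $(m_v,\ell_v)$. Set $X^+ = X \cup \{\ell_v : v \text{ a reticulation of } N\}$. A quick degree count confirms that $N^+$ is again a rooted binary phylogenetic network: the new $\ell_v$ are leaves, each $m_v$ has in-degree $1$ and out-degree $2$, every original vertex keeps its in- and out-degrees, the root is untouched, and acyclicity is inherited from $N$ since we only subdivide arcs and append sinks.

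Next I would exhibit an admissible set explicitly, rather than solve the $2$-SAT instance directly. Take $S$ to be every arc of $N^+$ except the arcs $(m_v,v)$; these excluded arcs will be the linking arcs. I expect the verification to be routine. Each $(m_v,v)$ has source of out-degree $2$ and target of in-degree $2$, so it lies outside $S_1$ and may be dropped without violating $S_1 \subseteq S$. The only in-degree-$2$ vertices of $N^+$ are the original reticulations $v$, each of which retains in $S$ exactly its \emph{other} incoming arc, so ($C_1$) holds. For ($C_2$), observe that every removed arc issues from one of the new vertices $m_v$, so no original out-degree-$2$ vertex loses an outgoing arc, while each $m_v$ retains its pendant arc $(m_v,\ell_v) \in S$; thus ($C_2$) holds everywhere. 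Hence $S$ is admissible, and Theorem~\ref{main}(a) shows $N^+$ is tree-based, with base tree given by $S$ and the $(m_v,v)$ as linking arcs.

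Finally I would check that $N^+|X = N$. Deleting each new leaf $\ell_v$ together with its arc leaves $m_v$ with in-degree $1$ and out-degree $1$; suppressing this degree-$2$ vertex merges $(p,m_v)$ and $(m_v,v)$ back into the original arc $(p,v)$, and no other part of $N^+$ differs from $N$, so the restriction returns exactly $N$. The main thing to get right -- and the only genuinely delicate point -- is the placement of the pendant leaf: the source of each intended linking arc must be a legitimate attachment point (an in-degree-$1$, out-degree-$2$ subdivision vertex of the base tree) rather than an original branching vertex, which is precisely why the leaf is hung from a freshly subdivided point $m_v$; this is also what guarantees that the restriction operation cleanly inverts the augmentation. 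It remains only to confirm that every original vertex still lies on a path from $\rho$ to some leaf of $X$ (so that restriction retains it), which is immediate since such paths in $N$ survive the subdivisions intact.
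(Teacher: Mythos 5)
Your proposal is correct and follows essentially the same route as the paper's own proof: subdivide one incoming arc of each reticulation vertex, hang a new pendant leaf off the subdivision vertex, and take the admissible set to be everything except the arcs from the new subdivision vertices into the reticulations. Your verification of ($C_1$), ($C_2$), $S_1\subseteq S$, and of $N^+|X=N$ is just a more explicit spelling-out of what the paper leaves to the reader.
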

Here $N^+|X$ is the restriction of $N^+$  to those vertices that have a path to at least one leaf in $X$ (it is obtained from $N^+$ by  deleting all vertices and arcs that do not lie on a path to a leaf in $X$, and then suppressing any vertices of in-degree and out-degree equal to 1).
\begin{proof}
Write $N=(V,A)$ and let $S=A$. Then $S$ fails to be admissible only by violations of condition ($C_1$). Thus we can convert $S$ into an admissible subset $S'$ by performing the following step 
for each vertex $v$ of $N$ that has in-degree 2.  Select either one of the incoming arcs arriving of $v$ -- say, $(u,v)$ -- and  subdivide this arc and attach a new leaf $x_v$ (specific to $v$) to the subdivided vertex $w$. Now 
remove $(u,v)$ from $S$ and replace it with the arcs $(u,w)$ and $(w, x_v)$.  Once this step is performed for all vertices of in-degree 2, the  set $S'$ of arcs of the resulting network $N'$ is admissible and so $N'$ is tree-based. Moreover, $N'|X = N$. 
\end{proof}

\subsection{Necessary and sufficient conditions for tree-based}

We now describe two further ways of characterizing tree-based. These are more immediate and of less direct algorithmic relevance, but we state them here as they provide a more complete picture of what `tree-based' means.   We say that a set of arcs in a directed graph is {\em independent} if no two arcs in the set share a vertex. The proof of the following result is provided in the Appendix.

\begin{proposition}
\label{mainpro}
Let $N= (V,A)$ be a rooted binary phylogenetic network on leaf set $X$.
The following are equivalent.
\begin{itemize}
\item[(a)]  $N$ is tree-based.
\item[(b)] There is an independent set $I$ of arcs of $N$
 for which $N' = (V, A-I)$ is a rooted tree.
 \item[(c)]  $N$ has a rooted spanning tree (with root $\rho$) that contains the arcs in $S_1$ and with all its leaves in $X$.
 \end{itemize}

\end{proposition}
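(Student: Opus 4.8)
The plan is to route everything through the notion of an admissible arc set, since Theorem~\ref{main}(a) already identifies statement (a) with the existence of an admissible $S\subseteq A$, and to establish the cycle of implications (a)$\Rightarrow$(b)$\Rightarrow$(c)$\Rightarrow$(a). Throughout I would use that in $N$ the reticulations are exactly the vertices of in-degree~$2$, the tree vertices (together with the root, when it has out-degree~$2$) are the vertices of out-degree~$2$, and every other non-root vertex has in-degree and out-degree~$1$.

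For (a)$\Rightarrow$(b) I would invoke Theorem~\ref{main}(a): if $N$ is tree-based then some admissible $S$ exists, the pair $(V,S)$ is a valid support tree and hence a rooted tree on all of $V$, and $A-S$ is precisely the set of linking arcs. Since the tree-based construction forbids two additional arcs from starting or ending at the same attachment point, no two linking arcs share an endpoint, so $I:=A-S$ is independent and $(V,A-I)=(V,S)$ is a rooted tree, giving (b). Parallel edges cause no difficulty here: a linking arc may run parallel to a tree arc, but the tree arc lies in $S$, not in $I$.

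The step (b)$\Rightarrow$(c) is where the real work lies. Given an independent $I$ with $(V,A-I)$ a rooted tree, I would first note that in a rooted tree on $V$ with root $\rho$ every non-root vertex has in-degree exactly~$1$ while $\rho$ has in-degree~$0$; comparing with the in-degrees in $N$ forces $I$ to consist of exactly one incoming arc at each reticulation and nothing else. Independence of $I$ then rules out every way a spurious leaf could appear: a reticulation cannot also lose its unique outgoing arc (that arc would meet the deleted incoming arc at the reticulation), a vertex of out-degree~$2$ cannot lose both outgoing arcs (they would meet at that vertex), and $\rho$ cannot lose its outgoing arc(s), since the tree must span $V$. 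Hence the out-degree-$0$ vertices of $(V,A-I)$ are exactly $X$. The same analysis shows no arc of $S_1$ lies in $I$: an arc of $I$ has a reticulation as target, so its target has in-degree~$2$, not~$1$, and if its source had out-degree~$1$ that source would be a reticulation or the root, each excluded by independence or spanning. Thus $(V,A-I)$ is a rooted spanning tree containing $S_1$ with all leaves in $X$, which is (c).

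Finally, for (c)$\Rightarrow$(a) I would let $S$ be the arc set of the spanning tree and verify directly that $S$ is admissible: it contains $S_1$ by hypothesis; condition $(C_1)$ holds because each reticulation, being a non-root vertex of the spanning tree, has in-degree exactly~$1$ in $(V,S)$; and condition $(C_2)$ holds because a vertex of out-degree~$2$ is not in $X$, hence is not a leaf of the spanning tree and so retains at least one outgoing arc in $S$. Theorem~\ref{main}(a) then gives that $N$ is tree-based. The main obstacle is the middle implication (b)$\Rightarrow$(c): the hypotheses of (b) are ostensibly weaker than those of (c), and the content of the argument is precisely that independence of $I$, together with the in-degree constraint forced by the tree structure, is strong enough to yield both the containment of $S_1$ and the leaves-in-$X$ condition. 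I would take care there to handle reticulations whose outgoing arc feeds another reticulation, and any parallel-arc configurations, since these are exactly the cases the independence hypothesis is there to exclude.
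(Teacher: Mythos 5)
Your proof is correct, and it takes a somewhat different route. The paper establishes the two biconditionals (a)$\Leftrightarrow$(b) and (a)$\Leftrightarrow$(c) separately, with (a) as the hub, whereas you close the cycle (a)$\Rightarrow$(b)$\Rightarrow$(c)$\Rightarrow$(a) and route everything through admissibility and Theorem~\ref{main}(a). The genuinely new piece in your version is the implication (b)$\Rightarrow$(c), which the paper never proves directly; your argument there is sound: the requirement that $(V,A-I)$ be a rooted tree on all of $V$ forces $I$ to consist of exactly one incoming arc at each in-degree-2 vertex and nothing else, and independence of $I$ then prevents any vertex outside $X$ from losing all of its outgoing arcs and guarantees $S_1\cap I=\emptyset$. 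This is in fact slightly more careful than the corresponding step in the paper, whose direct (b)$\Rightarrow$(a) argument asserts that connectivity of $N'=(V,A-I)$ already forces its leaf set to be $X$ --- a claim that really does rest on the independence of $I$ in exactly the way you spell out. What the paper's route buys is brevity: re-interpreting the arcs of $I$ as linking arcs on the subdivision tree $N'$ gives (b)$\Rightarrow$(a) in one step from the definition of tree-based. What your route buys is uniformity: every implication is checked against the single criterion ``$S$ contains $S_1$ and satisfies $(C_1)$ and $(C_2)$,'' and your (c)$\Rightarrow$(a) step then coincides with the paper's.
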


Next we consider  a necessary condition for $N$ to be tree-based, based on the concept of `antichains'. This can provide a rapid way to verify that certain networks cannot be tree-based. An {\em antichain} in any directed graph is simply a subset $\A$ of vertices that has the property that there is no directed
path in the graph  from any one vertex in $\A$ to any other vertex in $\A$.   Let $N$ be a binary phylogenetic network. 
If, for any antichain $\A$ of non-leaf vertices in $N$, there exists at least $|\A|$ arc-disjoint paths from $\A$ to  the leaf set, we say it satisfies the {\em antichain-to-leaf property}. By a version of Menger's theorem for disjoint sets of vertices in directed graphs~\citep{boh01}, the antichain-to-leaf property is  equivalent to the  statement that for any  antichain $\A$ of non-leaf vertices in $N$,
at least $|\A|$ arcs of $N$ must be cut in order to separate $\A$ from $X$.

The following result, the proof of which is also in the Appendix,  provides a necessary condition for $N$ to be tree-based; if it fails we know immediately that $N$ cannot be tree-based. 

\begin{proposition}
\label{mainpro2}
If a binary phylogenetic network over leaf set $X$ is tree-based then it satisfies the antichain-to-leaf property.  In particular, the largest antichain in any tree-based network $N$ over $X$ has size exactly $|X|$.  Thus any tree-based network that has a larger antichain than the number of leaves cannot be tree-based.  
\end{proposition}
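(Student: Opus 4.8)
The plan is to leverage the support-tree characterization already established in Proposition~\ref{mainpro}. If $N=(V,A)$ is tree-based, then by Proposition~\ref{mainpro}(c) there is a rooted spanning tree $T'$ of $N$, with root $\rho$, whose leaves all lie in $X$ (indeed, since $T'$ spans $V$ and the leaves of $N$ are exactly $X$, the leaf set of $T'$ equals $X$). Because $T'$ is a subgraph of $N$ on the same vertex set, every directed path in $T'$ is a directed path in $N$; hence any antichain $\A$ of $N$ is also an antichain of $T'$, i.e.\ no vertex of $\A$ is an ancestor of another in $T'$.

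The key structural fact about rooted trees is that incomparable vertices have vertex-disjoint sets of descendants. First I would record this: if two vertices $u,v$ of $T'$ had a common descendant $w$, then $u$ and $v$ would both be ancestors of $w$; since the ancestors of any vertex of a rooted tree form a chain, $u$ and $v$ would be comparable. Taking the contrapositive, the descendant-subtrees $T'_v$ rooted at the distinct vertices $v\in\A$ are pairwise vertex-disjoint.

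To establish the antichain-to-leaf property, let $\A$ be an antichain of non-leaf vertices. Each $v\in\A$ is an internal vertex of $T'$, so a maximal downward walk from $v$ inside the finite tree $T'$ terminates at a leaf of $T'$, which lies in $X$; choose one such path $P_v$ for each $v$. Since the subtrees $T'_v$ are vertex-disjoint, the paths $P_v$ are pairwise vertex-disjoint, hence \emph{a fortiori} arc-disjoint, and they end at $|\A|$ distinct leaves. This exhibits $|\A|$ arc-disjoint paths from $\A$ to $X$ directly, without even invoking the Menger reformulation. For the size bound I would run the same argument on an arbitrary antichain $\A$, now allowing leaves: each $v\in\A$ spans a descendant-subtree $T'_v$ containing at least one leaf of $T'$ (a single-vertex subtree when $v$ is itself a leaf), these subtrees are pairwise disjoint, so together they contain $|\A|$ distinct leaves of $X$, giving $|\A|\le |X|$. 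Conversely $X$ is itself an antichain of size $|X|$ (leaves have out-degree $0$), so the maximum antichain has size exactly $|X|$; the final sentence is then just the contrapositive of this bound.

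As for the main obstacle: none of the steps is computationally heavy, so the real content is conceptual --- recognizing that Proposition~\ref{mainpro} hands us a spanning tree whose leaves are precisely $X$, and that the tree order converts an antichain into a family of disjoint subtrees each carrying its own leaf. The only points needing care are confirming that an antichain of $N$ remains an antichain of $T'$ (immediate, since $T'$ is a subgraph of $N$) and that every internal vertex of $T'$ genuinely reaches a leaf of $X$ rather than terminating at some non-$X$ sink, which is guaranteed by the ``all leaves in $X$'' clause of Proposition~\ref{mainpro}(c).
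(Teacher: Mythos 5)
Your proof is correct and follows essentially the same route as the paper: pass to the support/spanning tree (the paper uses the support tree from the definition, you invoke Proposition~\ref{mainpro}(c), which yields the same object), observe that an antichain of $N$ remains an antichain there, and use disjointness of descendant subtrees to obtain arc-disjoint paths to distinct leaves. The only cosmetic difference is in the ``largest antichain has size $|X|$'' claim, which the paper derives by contradiction via the antichain-to-leaf property while you count leaves in disjoint subtrees directly; both arguments are sound.
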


Proposition~\ref{mainpro2}   provides an easy way to verify that the network in Fig.~\ref{fig:egs}(i) is not tree-based, since it contains an antichain (the set $\{u, v, w\}$)  that is larger than the leaf set of the network.

It might  seem plausible that the antichain-to-leaf property is also a sufficient condition for a network to be tree-based.  Alas, this is not the case, and Fig.~\ref{fig:counter} shows a particular case where antichain-to-leaf property holds, yet the network is not tree-based.  

\begin{figure}[ht]
\begin{center}
\includegraphics[width=12cm]{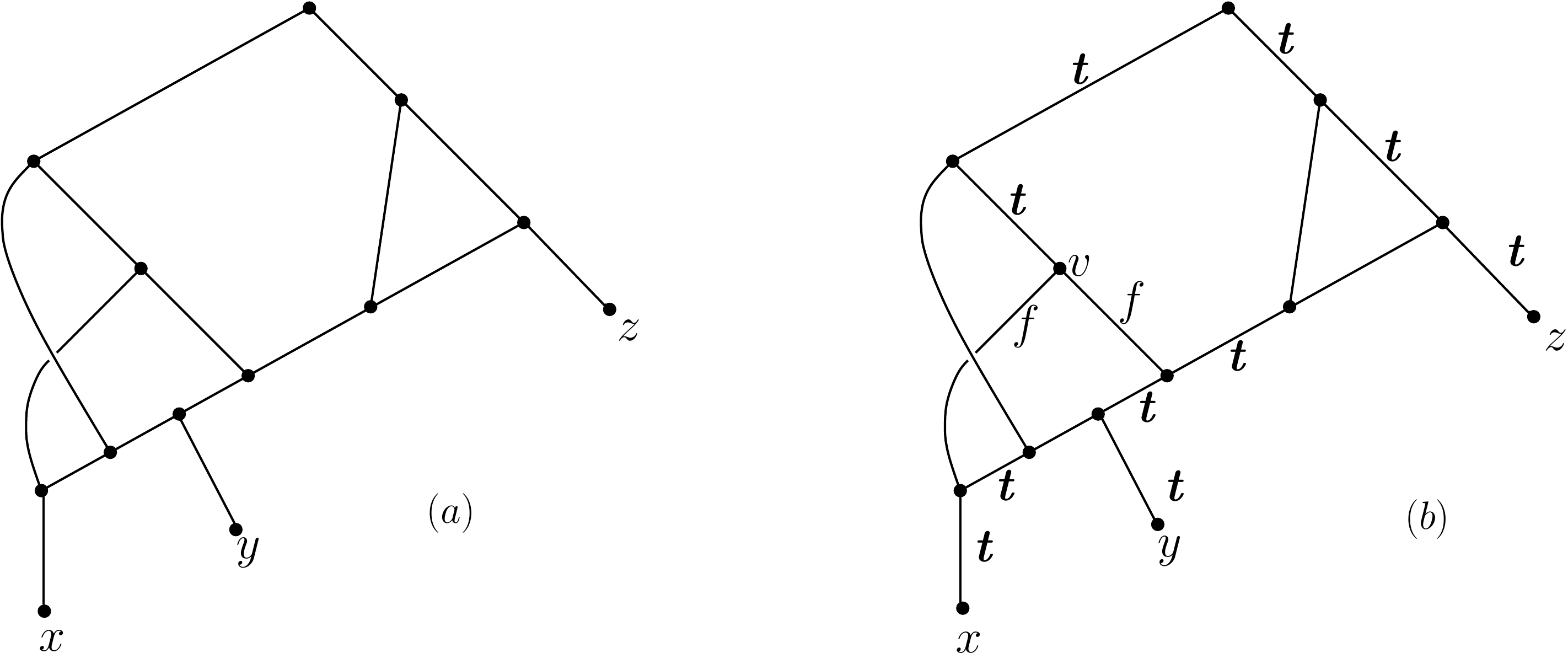}
\caption{(a) A network that is not tree-based, even though it satisfies the antichain-to-leaf property.   That this network fails to be tree based can be verified by applying Corollary~\ref{alg};  starting with the arcs in $S_1$  labelled $t$ (shown in bold in (b)) and applying the conditions ($C_1$)$'$ and ($C_2$)$'$ repeatedly, we are forced to label both of the arcs outgoing from $v$  by $f$, and at the next step ($C_2$)$'$ would assign one of these two arcs a second label $t$. }
\label{fig:counter}
\end{center}
\end{figure}

We turn now to some further necessary and sufficient conditions for $N$ to be tree-based. 
	
\begin{proposition}\label{specialcases}
Consider a binary phylogenetic network $N$ over leaf set $X$.
\begin{itemize}
\item[(i)]  If each vertex of $N$ of in-degree 2 has parents of out-degree 2, then $N$ is tree-based.
\item[(ii)]  If $N$ has a vertex of in-degree 2 whose parents both have out-degree 1, then $N$ is not tree-based.
\end{itemize}

\end{proposition}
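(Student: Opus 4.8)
The plan is to derive both parts from the admissibility criterion of Theorem~\ref{main}: part~(ii) follows from a one-line obstruction to building \emph{any} admissible set, while part~(i) requires exhibiting one, which I would reduce to a bipartite matching problem.

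For part~(ii), let $v$ be a vertex of in-degree $2$ whose two parents $p_1,p_2$ both have out-degree $1$. Each $p_i$ has $(p_i,v)$ as its unique outgoing arc, and since $p_i$ has out-degree $1$ this arc lies in $S_1$. Hence any admissible set $S$ must contain $S_1\supseteq\{(p_1,v),(p_2,v)\}$, so \emph{both} incoming arcs of $v$ belong to $S$. But constraint~($C_1$) demands that exactly one incoming arc of $v$ lie in $S$, a contradiction. Thus no admissible set exists, and $N$ is not tree-based by Theorem~\ref{main}(a). (Equivalently, via Proposition~\ref{mainpro}(c), no spanning tree can contain both arcs of $S_1$ entering $v$.)

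For part~(i), I would construct an admissible $S$ by deciding, for each reticulation $v$ (in-degree $2$), which single incoming arc to keep; the deleted arcs $D=A\setminus S$ must then respect~($C_2$), i.e.\ no vertex of out-degree $2$ may have both outgoing arcs deleted. Under the hypothesis every reticulation has both parents of out-degree $2$, so each candidate deleted arc runs from an out-degree-$2$ vertex to an in-degree-$2$ vertex and is therefore automatically disjoint from $S_1$, giving $S\supseteq S_1$. I would model the choice as a bipartite graph $H$ joining each reticulation $v$ to its two parents; a valid choice is precisely a matching of $H$ saturating the reticulation side, since $v$ deletes its arc to the matched parent and injectivity of the matching forces each out-degree-$2$ vertex to lose at most one outgoing arc.

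The crux is verifying Hall's condition, and here the degree structure does all the work: for any set $R'$ of reticulations, each $v\in R'$ contributes exactly $2$ edges (to its two parents), while each parent in the neighbourhood $N(R')$ has out-degree $2$ and hence absorbs at most $2$ of these edges. Counting edges gives $2|R'|\le 2\,|N(R')|$, so $|N(R')|\ge|R'|$ and Hall's condition holds. The resulting matching yields an admissible $S$ (indeed the deleted set $D$ is independent, since no binary-network vertex has both in-degree $2$ and out-degree $2$, so two deleted arcs can share neither a source, a target, nor a source-target endpoint), whence $N$ is tree-based by Theorem~\ref{main}(a), or equivalently by Proposition~\ref{mainpro}(b). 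The main obstacle is exactly this global feasibility step in part~(i): the local freedom at each reticulation must be shown to combine into a consistent choice, and the Hall/edge-counting argument is what closes that gap; part~(ii), by contrast, is immediate once the $S_1$ constraint is invoked.
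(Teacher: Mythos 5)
Your proposal is correct, and it diverges from the paper on both parts. For part (ii) the paper argues via the antichain-to-leaf obstruction of Proposition~\ref{mainpro2}: the two out-degree-1 parents form an antichain of size two, but every path from either of them to a leaf must use the single arc below $v$, so two arc-disjoint paths cannot exist. You instead go through the admissibility criterion of Theorem~\ref{main}: both arcs $(p_i,v)$ lie in $S_1$ (their sources have out-degree 1), so any admissible $S$ contains both incoming arcs of $v$, contradicting ($C_1$). Both arguments are sound and equally short; yours has the mild advantage of not needing Proposition~\ref{mainpro2} at all, while the paper's version illustrates the antichain criterion in action. For part (i) the paper simply cites Lemma~1 of Gambette et al., noting it uses Hall's theorem; you have reconstructed that argument in full, and your reconstruction is correct: the bipartite graph between reticulations and their parents, the edge-count $2|R'|\le 2|N(R')|$ coming from the out-degree-2 hypothesis, the observation that deleted arcs avoid $S_1$ because their sources have out-degree 2 and their targets in-degree 2, and the verification of ($C_1$) and ($C_2$) from the matching are all as they should be (the parallel-arc case, where a reticulation's two parents coincide, is also handled correctly since the matched parent then just loses one of its two arcs). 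So your write-up supplies a self-contained proof where the paper delegates to the literature, at the cost of a page of matching theory; the paper's citation buys brevity.
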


\begin{proof}
Part (i) was established in the proof of Lemma 1 of \cite{gam} by an elegant application of Hall's matching theorem for digraphs.  For part (ii) note that the two parents form an antichain but paths from these parents to leaves both have to go through the edge below $v$, meaning they are not arc-disjoint, thereby violating Proposition~\ref{mainpro2}. 
\end{proof}

We end this section by showing how Theorem~\ref{main} provides a convenient way to verify that tree-child networks are tree-based, as are  tree-sibling 
networks (a result stated by~\cite{leo} without proof).

Recall that a network is a {\em tree-child} network if every non-leaf vertex is the parent of at least one vertex of in-degree 1, while (more generally) a {\em tree-sibling} network is a network for which every vertex $v$ of in-degree $2$ 
has a sibling $v'$ that has in-degree 1 (i.e. $v'$ is either a leaf of has 
outdegree 2). `Sibling' here means that the vertices share a parent. 
A network $N$ is {\em reticulation visible} if every vertex of $N$ of in-degree 2 has the property that for some leaf $x$ of $N$ all paths
from the root of $N$ to $x$ pass through $v$.  Tree-child networks are subset of the tree-sibling networks, but tree-sibling and reticulation visible represent different classes (and one is not a subset of the other).

\begin{corollary}
\label{treechild}
The class of tree-based networks includes tree-child networks (and thus hybridization networks), and, more generally,
tree-sibling networks.  It also includes the class of reticulation visible networks.
\end{corollary}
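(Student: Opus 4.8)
The plan is to deduce everything from Theorem~\ref{main} and Proposition~\ref{specialcases}, splitting into two cases. Since tree-child networks form a subclass of tree-sibling networks (at every non-leaf vertex a tree-child network has a child of in-degree 1, and for a reticulation child $v$ of a parent $p$ this in-degree-1 child of $p$ is distinct from $v$, hence serves as the required sibling), and hybridization networks are tree-child, it suffices to prove the claim for (I) tree-sibling networks and (II) reticulation-visible networks.

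For (I) I would exhibit an admissible set $S$ directly and invoke Theorem~\ref{main}(a). Start with $S=S_1$ and process each reticulation $v$: by the tree-sibling property $v$ shares a parent $p$ with an in-degree-1 vertex $v'$, so $(p,v')\in S_1\subseteq S$; put into $S$ the incoming arc of $v$ coming from its \emph{other} parent $q$ (if both parents qualify as such a $p$, pick either) and leave $(p,v)$ out. Then ($C_1$) holds because each reticulation keeps exactly one incoming arc, and the discarded arc $(p,v)$ is never forced, its source $p$ having out-degree $\ge 2$ so that $(p,v)\notin S_1$. For ($C_2$), take an out-degree-2 vertex $u$: if some child of $u$ has in-degree 1 the corresponding arc already lies in $S_1$; otherwise both children are reticulations, and since $u$ then has no in-degree-1 child it cannot be the shared parent $p$ of either child, so each such child selects precisely the arc from $u$. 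Containment $S_1\subseteq S$ is immediate, so $S$ is admissible and the network is tree-based.

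For (II) I would instead reduce to Proposition~\ref{specialcases}(i) via a purely structural lemma: in a reticulation-visible network every vertex of out-degree 1 has a child of in-degree 1, equivalently no reticulation has a parent of out-degree 1, so every reticulation has both parents of out-degree 2. Granting the lemma, Proposition~\ref{specialcases}(i) applies verbatim. To prove it, suppose $a$ has out-degree 1 with unique child $c$ and $c$ has a second parent $u\neq a$. If $a$ is a reticulation it has a visible leaf $x_a$ with every root-to-$x_a$ path passing through $a$ (hence through $(a,c)$); I would contradict this by producing a root-to-$x_a$ path avoiding $a$. Some root-to-$u$ path must avoid $a$, since otherwise every route to $u$ runs $a\to c\leadsto u$, which with the arc $u\to c$ yields a directed cycle; extending such a path through $u\to c$ and then along any $c\leadsto x_a$ path gives the forbidden path, neither extension revisiting $a$ because $a\to c$ prevents $a$ from being a descendant of $c$. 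The case $a=\rho$ is quicker still: all vertices are reachable from $\rho$, yet reaching $u$ forces the same cycle $c\leadsto u\to c$.

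I expect the reticulation-visible case to be the main obstacle, and within it the structural lemma is the crux: once one sees that visibility forbids an out-degree-1 vertex from feeding a reticulation, the problem collapses onto the already-established Proposition~\ref{specialcases}(i). The tree-sibling construction is by comparison bookkeeping, its only delicate point being the verification of ($C_2$) at vertices whose two children are both reticulations---exactly the configuration that obstructs tree-basedness in general (cf.\ Fig.~\ref{fig:counter}), which the sibling hypothesis defuses by guaranteeing a competing in-degree-1 child elsewhere.
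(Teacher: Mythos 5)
Your proposal is correct and follows essentially the same route as the paper: for the tree-sibling case you exhibit the identical admissible set (discard the arc from the shared parent $p$ to each reticulation $v$, keeping the arc from the other parent) and invoke Theorem~\ref{main}(a), with the same verification that ($C_2$) cannot fail at a vertex whose two children are both reticulations; and the reticulation-visible case is reduced, exactly as in the paper, to Proposition~\ref{specialcases}(i). The only difference is that where the paper simply cites \cite{gam} for the fact that reticulation-visibility forces both parents of every in-degree-2 vertex to have out-degree 2, you supply a correct self-contained proof of that structural lemma (the acyclicity-plus-visibility contradiction, with the root handled separately), which fills in a cited step rather than changing the argument.
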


\begin{proof}
For tree-sibling networks, for each vertex $v$ of in-degree 2, select exactly one sibling $v'$ of $v$ that 
has in-degree 1, and if $p$ is the parent of
$v$ and $v'$ label the arc $(p,v)$ by $f$.  Then if $S$ is the set of arcs of $N$ minus the 
arcs labelled $f$ then $S$ is an admissible subset of arcs for $N$, and so, by Theorem \ref{mainpro}, $N$ is tree-based.
For reticulation-visible networks, \cite{gam}  showed that such networks satisfy the condition described above in part (i) of Proposition~\ref{specialcases}, which, 
in turn, they established suffices for $N$ to be tree-based.
\end{proof}

Note that although all tree-sibling networks are tree-based, it is easy to construct an example of a tree-based network that is not tree-sibling (an example is provided by \cite{leo}).

\section{An algorithm}

Theorem~\ref{main} furnishes a polynomial-time algorithm that takes any binary phylogenetic network $N$ and determines whether or not
it is tree-based. An extension can also then be used to determine a valid support tree for $N$, and indeed to compute all of these (however there may be exponentially many, and even counting the number is unlikely to be easy). 

First we describe a simple test that decides whether or not a network is tree-based, and which is  based on a well-known criteria for testing the satisfiability of any instance of 2-SAT  by taking the transitive closure of the implication relation (we give an example below).

To present this algorithm it is helpful to restate conditions ($C_1$) and ($C_2$) in an equivalent way, by making two modifications. Firstly, we will indicate that an arc $a$ is in $S$ or not in $S$ by assigning the arc the label $t$ (=`true') and $f$ (`false') respectively.  Secondly we will state the two conditions ($C_1$) and ($C_2$) in the form of implications (`if...then') to show how a label assigned to one arc can `force' the assignment of a label to an adjacent arc.

\begin{itemize}
\item[($C_1$)$'$]  when $v$ has in-degree 2, (i) if one of the in-coming arcs has label $t$ then the other in-coming is assigned label $f$, and 
(ii) if one of the in-coming arcs has label $f$ then the other in-coming arc is assigned label $t$.
\item[($C_2$)$'$]  when  $v$ has out-degree 2, if one of the out-going arcs has label $f$ then the other outgoing arc is assigned label $t$.
\end{itemize}

Now, let us label each arc in $S_1$ by  $t$, and then extend this labelling to other arcs by repeated applications rules  ($C_1$)$'$ and ($C_2$)$'$ when they apply.  It is clear that two things could happen: either a single label is assigned to (some or all of) the arcs of $N$ and the rules do not assign a label to any further arcs, or else at some point an arc could be assigned a label different from the one it has received earlier in the process.  
In turns out that $N$ is tree-based precisely if this latter case does not occur.  This is formalized in the following corollary of Theorem~\ref{main}, which is justified by  a well-known algorithm for testing satisfiability of 2-SAT \citep{kro}.

\begin{corollary}
\label{alg}
$N$ is tree-based if and only if case (i) does not arise under the following procedure:  Assign all arcs in $S_1$ label $t$ and then repeatedly apply conditions $(C_1)$$'$ and $(C_2)$$'$ to extend this labelling to other arcs of $N$, until either (i) an arc is assigned a label different from its existing label or (ii) the conditions can no longer be applied.  
\end{corollary}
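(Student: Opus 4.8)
The plan is to read the labelling procedure as the standard unit-propagation (implication-closure) step for the 2-SAT instance constructed in Theorem~\ref{main}(b), and then to use the special combinatorial structure of that instance to argue that this single propagation from $S_1$ already decides satisfiability. First I would record the translation explicitly: introduce a Boolean variable $x_a$ for each arc $a$ (true meaning $a\in S$); then $S_1\subseteq S$ becomes the unit clauses $x_a$ for $a\in S_1$, condition $(C_1)$ at an in-degree-2 vertex becomes the two clauses ``at least one'' and ``at most one'' on its incoming arcs, and $(C_2)$ at an out-degree-2 vertex becomes the single ``at least one'' clause on its outgoing arcs. A direct check shows that the implications generated by these clauses are exactly the forcing rules $(C_1)'$ and $(C_2)'$, with $t/f$ corresponding to true/false. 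I would also note at the outset that the set of labels forced by repeatedly applying $(C_1)'$ and $(C_2)'$ from the initial labelling is the closure of $S_1$ under these implications, hence independent of the order of application, so that ``case (i) arises'' is a well-defined property.

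The soundness direction is immediate from Theorem~\ref{main}(a): $N$ is tree-based iff an admissible $S$ exists, i.e.\ iff the 2-SAT instance is satisfiable. Any admissible $S$ must agree with every forced label, since it contains $S_1$ and obeys $(C_1),(C_2)$. Thus if the procedure ever assigns some arc two different labels (case (i)), no admissible $S$ can exist and $N$ is not tree-based.

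The substantive direction is the converse: when case (i) does not arise I must extend the consistent partial labelling to a full admissible set, and this is where the structure of $N$ enters. The only arcs not already in $S_1$ run from an out-degree-2 vertex to an in-degree-2 (reticulation) vertex; call these the \emph{free arcs} and let $F$ be those left unlabelled when propagation halts. The key observation is that each free arc $a=(u,v)$ lies in at most one ``exactly one'' constraint, namely the $(C_1)$ constraint at its unique target $v$, and in at most one ``at least one'' constraint, namely the $(C_2)$ constraint at its unique source $u$. After propagation stabilises without conflict one checks that every constraint meeting $F$ is either already satisfied by a forced label or relates two arcs of $F$; moreover each arc of $F$ lies in exactly one active ``exactly one'' constraint (its target's other incoming arc must itself be free, else $a$ would have been forced) and in at most one active ``at least one'' constraint. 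Viewing the arcs of $F$ as vertices and the active constraints as edges, the ``exactly one'' edges therefore form a perfect matching and the ``at least one'' edges a partial matching, so the constraint graph is a disjoint union of paths and even cycles in which the two edge types alternate.

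It remains to see that such a system is always satisfiable, which is routine bookkeeping: along each path or even cycle one sets the arcs alternately true and false, and verifies that every $\ne$-constraint holds and no $\lor$-constraint is left with both arcs false. Assigning $F$ this way and retaining the forced labels yields a satisfying assignment, equivalently an admissible $S$, so $N$ is tree-based by Theorem~\ref{main}(a). The main obstacle is precisely this completeness step, because unit propagation does not decide 2-SAT in general; the argument succeeds here only because the phylogenetic structure confines each arc to at most one constraint of each type, which rules out the odd exclusive-or cycles that would otherwise make propagation from $S_1$ insufficient.
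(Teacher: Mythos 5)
Your proposal is correct, and it actually does more work than the paper, which offers no formal proof of Corollary~\ref{alg}: the paper simply asserts that the procedure ``is justified by a well-known algorithm for testing satisfiability of 2-SAT'' (citing Krom). That justification is thinner than it looks, because the procedure in the corollary is only unit propagation from the unit clauses arising from $S_1$, whereas the general 2-SAT decision procedure requires the full implication closure (or the strongly connected components of the implication graph); unit propagation alone does not decide 2-SAT in general. You correctly identify this as the real gap and close it with a structural argument specific to these instances: after conflict-free propagation, every remaining constraint pairs two unlabelled arcs, the ``exactly one'' constraints (one per in-degree-2 target) form a perfect matching on the unlabelled arcs and the still-active ``at least one'' constraints (one per out-degree-2 source) form a partial matching, so the residual constraint graph is a disjoint union of paths and even cycles, which a proper 2-colouring satisfies. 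Together with the easy soundness direction (any admissible $S$ must respect every forced label), this gives a complete proof via Theorem~\ref{main}(a). The checks you flag implicitly do hold and are worth stating explicitly if you write this up: an unlabelled arc's matching partner at its in-degree-2 target must itself be unlabelled (else $(C_1)'$ would have fired), and its sibling at its out-degree-2 source is either unlabelled or already labelled $t$ (else $(C_2)'$ would have fired), which is exactly what makes the two-matching decomposition go through. In short: same reduction to the 2-SAT instance as the paper, but with a genuine completeness argument that the paper leaves to a citation which does not literally cover the stated procedure.
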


Notice that the only arcs that do not receive an immediate label by their membership of $S_1$ are the pairs of arcs that are incoming to a vertex of in-degree 2. If the label for one of these arcs is subsequently determined (by application of the $C'$ conditions) then the status of the other arc in the pair  is fixed by ($C_1$)$'$.  
An example of how this algorithm works is provided in Fig~\ref{fig:counter}. In this case the algorithm detects that the network is not tree-based, since it leads to the case (i) where an arc is assigned a label different from its existing label. 
Although this algorithm is easy to apply by hand on small examples, for very large networks there exist faster (linear time) algorithms for deciding satisfiability of 2-SAT  and these  could be applied, however these are more technical to describe \citep{asp}.

Suppose now that each arc of $N$ receives at most one label. Then $N$ is tree-based, and if every arc of $N$ gets a label then, by Theorem~\ref{main}, there is unique support tree for $N$.  However another possibility is that only some of the arcs of $N$ are assigned a label.  In this case there exists more than one support tree (though the network may still only be based on one possible phylogenetic tree).  

To find a support tree for $N$ it suffices to select any arc $a$ that remains unlabelled at the end of the process described above, then assign one label  ($t$ or $f$)  to $a$  and apply the process again of extending the labelling using repeated applications of ($C_1$)$'$ and ($C_2$)$'$. 
We can then continue this procedure (selecting an unlabelled arc and extending the labelling so far obtained)  until all arcs receive a label. The arcs labelled $t$ then correspond to arcs of a  support tree for $N$, and the arcs labelled $f$ are the linking arcs. 

It is possible in this way to generate all the possible support trees for $N$, however there may be exponentially many of them, since if $N$ has $k$ vertices of in-degree 2, then the number of support trees can be as large as $2^k$.   Even counting the number of support trees may be hard, since counting the number of satisfying solutions of 2-SAT is known to be \#P-complete \citep{val}.

\subsection{Example}

We now provide a simple illustration of how this algorithm works by applying it to a phylogenetic network proposed recently by \cite{marc} to represent the complex hybrid evolution of bread wheat.   Our application here is not intended to provide support for or against particular claims in that paper. Rather, the purpose is to show how the algorithm can be applied to a small but realistic phylogenetic  network to determine whether or not it is  tree-based, and if it is, to illustrate how the tree(s) and linking arcs can be readily  identified.

\begin{figure}[ht]
\begin{center}
\resizebox{11cm}{!}{
\includegraphics{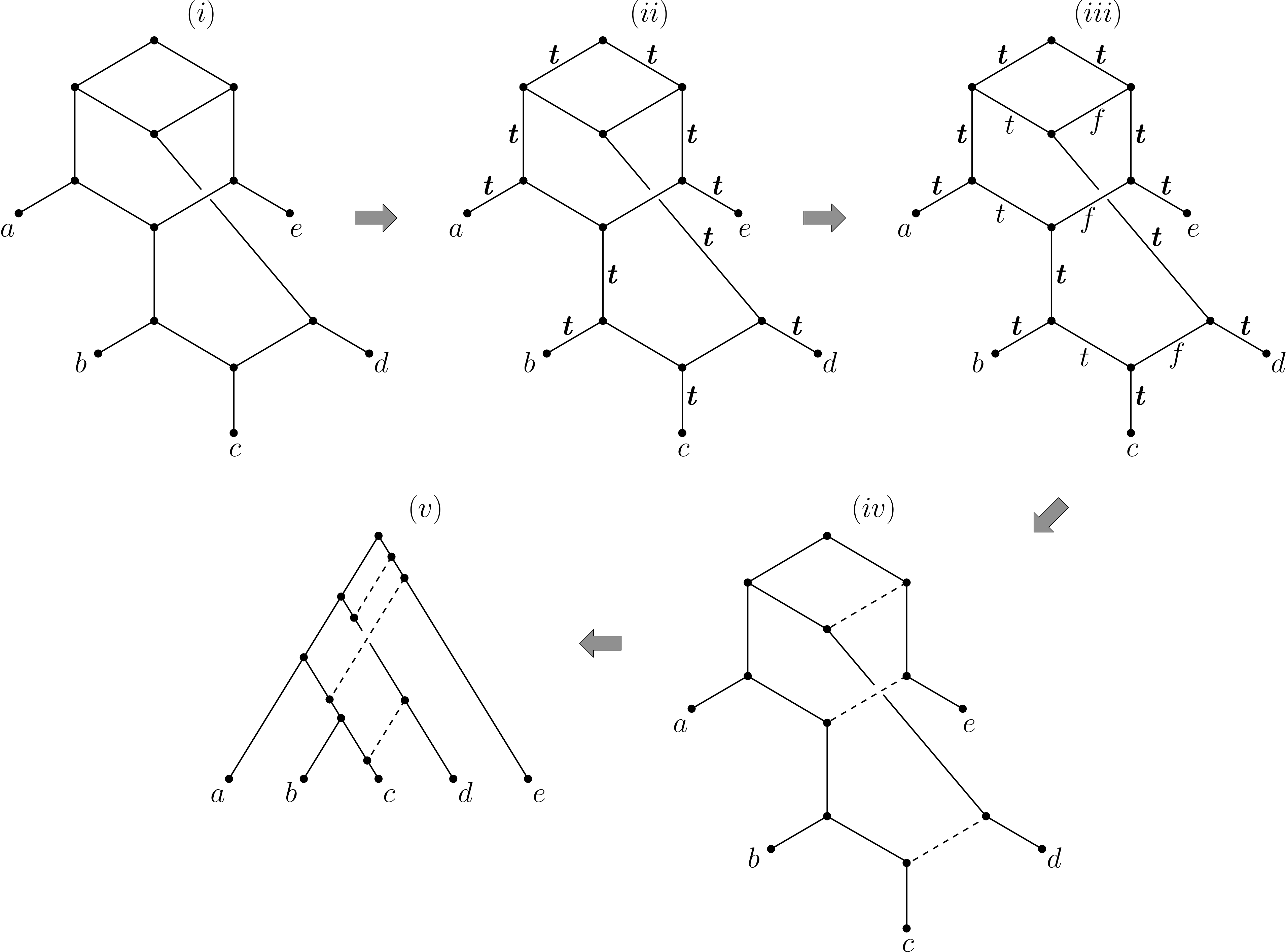}
}
\caption{ (i) A network from \cite{marc}  showing three ancient hybridization events in the evolution of bread wheat.  Corollary~\ref{alg} terminates at (ii) to show that the network is tree-based. Finding a particular support tree requires selecting a label for an unlabelled arc, extending the labelling using the rules ($C_1$)$'$ and ($C_2$)$'$ repeatedly, and continuing this process. In this example, the  six unlabelled arcs consist of three pairs, with the arcs in each pair incoming to one of the three vertices of in-degree 2. Assigning a label ($t$ or $f$) to an arc in one pair determines the assignment for the other arc in that pair but does not force any further arc assignments. Thus three independent assignments can be made for each pair, leading to $2^3 =8$ choices in total. For the particular choice shown in (iii) we obtain the support tree $T'$ shown in (iv) and thereby the tree-based representation shown in (v).
In this example, the seven rooted binary phylogenetic  trees that the network can be based on are all distinct.}
\label{wheat}
\end{center}
\end{figure}

Fig.~\ref{wheat}(i) shows a binary phylogenetic network on five leaves and three reticulations (vertices of in-degree 2). This network is essentially equivalent to the one shown in Fig. 3 of \cite{marc}, under the taxon labelling {\em a=Triticum uartu, b= Triticum turgidum, c= Triticum aestivum, d= Aegilops tauschii, e= Aegilops speltoides}.

First notice that the algorithm in Corollary~\ref{alg} tells us immediately that $N$ is tree-based, since the initial labelling of the arcs in $S_1$ by $t$ (shown in Fig.~\ref{wheat}(i)) 
does not extend further.
To find a support tree we see that assigning $t$ to either of the two arcs arriving at the lowest recitulation vertex does not cause any dual-labelled arc to arise. The same holds at the other two reticulation vertices, and these choices can all be made independently.  Thus there are $2^3=8$  eight possible support trees, and in this case the eight associated rooted binary phylogenetic trees (that the support trees are subdivisions of) are all distinct.

\section{The trees displayed by a tree-based network}

We have seen that a tree-based network can be based on more than one tree. An obvious question then is what one can say about the trees that
can act as a base for a given tree-based network $N$.  There is a related notion that applies to any binary phylogenetic network (tree-based or not), namely the 
concept of `displaying' a rooted phylogenetic tree, which we need to recall first.   Given a binary phylogenetic network $N$ over $X$, $N$ is said to {\em display} a rooted binary phylogenetic tree $T$ if $T$ can be obtained from $N$ by deleting arcs and vertices, and suppressing any resulting vertices of in-degree and out-degree equal to 1 \citep{cor}. Notice that if $N$ has at most $k$ vertices of in-degree 2, then it can display at most $2^k$ trees, and there has been some recent interest in identifying a class of networks for which this holds \citep{wil} or quantifying the extent to which it can fail \citep{cor}. A second active area of interest has involved determining  the computational complexity of deciding (for various categories of networks) whether a given network $N$ displays a given tree $T$ \citep{gam, leo2}.

It  is clear that if $N$ is a tree-based network, and is based on $T$, then $N$ must display $T$, since we can just delete the linking arcs, and suppress any resulting vertices of in-degree and out-degree equal to 1. However, it is possible for a tree-based network to display the tree $T$ but fail to be based on $T$ (see Fig.~\ref{fig:disp-not-based}). 
In other words, the  notion of a network being based on a tree is stronger than simply displaying the tree.

\begin{figure}[ht]
\begin{center}
\includegraphics[width=8cm]{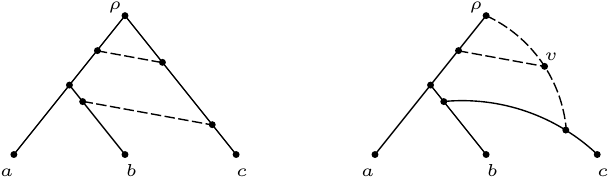}
\caption{A network that is based on a tree $ab|c$ (left) and displays $bc|a$ (right), but is not \emph{based} on the tree $bc|a$. Notice that in the right-hand network, vertex $v$ requires a linking arc to be attached to another linking arc.}
\label{fig:disp-not-based}
\end{center}
\end{figure}

Moreover, it turns out that the set of trees that are displayed by a  network that is based on $T$ need not bear any relation at all to $T$; indeed, for each positive integer $n$, there is a tree-based network displaying all trees on $n$ leaves.  This network can be based on \emph{any} tree on $n$ leaves as the following result shows (its proof is also in the Appendix):

\begin{proposition}\label{prop1}
For any $n\geq 2$ and any rooted binary phylogenetic $X$-tree $T$ on $n$ leaves, there is a tree-based binary network $N$ over $X$, based on $T$ and
with order $n^3$ linking arcs, such that $N$ displays {\em all} rooted binary phylogenetic $X$-trees.
\end{proposition}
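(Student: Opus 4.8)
The plan is to build $N$ \emph{from} $T$, so that the ``based on $T$'' half of the statement becomes essentially a bookkeeping check and all the real work is concentrated in universality. Since the proposition only asks, for each $T$, for \emph{some} network based on $T$ that displays every tree, I would take a subdivision $T'$ of $T$ and add linking arcs to it, keeping the arcs of $T'$ as the tree arcs. Recall (observation (i) of the paper) that a linking arc may only run between subdivision points of $T'$: its source thereby acquires out-degree $2$ (becoming a tree vertex) and its target acquires in-degree $2$ (becoming a reticulation). Consequently every reticulation of $N$ has exactly one incoming $T'$-arc and one incoming linking arc, and every out-degree-$2$ vertex retains at least one outgoing $T'$-arc. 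Moreover any arc with in-degree-$1$ target or out-degree-$1$ source cannot be a linking arc, so $S_1 \subseteq A(T')$. Hence $S = A(T')$ satisfies $(C_1)$, $(C_2)$ and contains $S_1$, so by Theorem~\ref{main}(a) the network is tree-based with support tree $T'$; since $T'$ subdivides $T$, $N$ is based on $T$ (equivalently, $T'$ is the spanning tree required by Proposition~\ref{mainpro}(c)). Thus the entire difficulty is to place the linking arcs so that $N$ also displays all $(2n-3)!!$ rooted binary trees on $X$.

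For universality I would argue by induction on $n$ via leaf-insertion, working with a nested family $N_2\subset N_3\subset\cdots\subset N_n=N$ (the display property is about the network alone and does not interact with the base tree, so the nesting need not respect $T$). A rooted binary tree on $\{x_1,\dots,x_n\}$ is obtained from one on $\{x_1,\dots,x_{n-1}\}$ by subdividing an edge and hanging $x_n$ there, so it suffices to equip $N_{n-1}$ with an \emph{insertion gadget} letting the lineage carrying $x_n$ attach along any edge that can occur in a displayed $(n-1)$-leaf tree. The gadget is a cascade of linking arcs tapping successive subdivision points of $T'$; choosing, at each reticulation, either its tree-parent or its linking-parent is precisely a display choice, and the reticulations not selected for attachment are passed through and suppressed. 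To display a prescribed target tree $T^*$ one reads off, level by level, where each leaf must be inserted and selects the reticulation-parents accordingly; the induction then identifies the suppressed subgraph with $T^*$. A convenient reduction is to split the gadgetry into a ``re-collection'' block of linking arcs that routes the $n$ leaves into a canonical caterpillar ordering, followed by an insertion block realising all trees from that canonical configuration, so that the argument does not depend on the internal shape of $T$.

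The main obstacle, and the step I would spend the most care on, is designing the insertion gadgets so that four competing requirements hold at once: (i) \emph{completeness} — every edge occurring in \emph{some} displayed $(n-1)$-leaf tree is reachable, so no target tree is missed; (ii) \emph{acyclicity} — the cascades must be threaded along $T'$ in a consistent topological order so that no directed cycle is ever created (this is exactly the constraint that makes certain networks fail to be tree-based, cf.\ the antichain obstruction of Proposition~\ref{mainpro2}); (iii) the \emph{binary degree constraints} — no two linking arcs may share an endpoint, forcing each subdivision point to be tapped by at most one cascade; and (iv) the \emph{arc budget}. I would arrange each leaf's gadget to use $O(n^2)$ linking arcs, so that summing over the $n$ leaves yields $O(n^3)$ in total, as required. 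Reconciling (i) with (iv) is the delicate point: one cannot afford to tap every arc of $N_{n-1}$ (there are already $\Theta(n^2)$ of them once several leaves have been inserted), so the gadget must exploit the layered structure of the construction to reach all \emph{relevant} attachment edges using only $O(n^2)$ arcs.

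Finally I would record the routine verifications: that $N$ is a legitimate binary phylogenetic network (unique root $\rho$, the correct in/out-degree profile at every vertex, and leaf set exactly $X$); that the topological threading guarantees $N$ is acyclic; and that $S=A(T')$ is the admissible set witnessing that $N$ is based on $T$. I would also remark that, because the linking-arc gadgetry depends only on the leaf set and the skeleton $T'$ and not on the internal shape of $T$, the same network can in fact be re-based on any tree on $X$ by re-choosing the admissible set, which recovers the stronger assertion made in the surrounding discussion; but this is not needed for the proposition as stated.
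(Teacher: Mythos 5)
Your reduction of the ``based on $T$'' half to the admissibility of $S=A(T')$ is fine and matches the paper's framework, but the universality half --- which you yourself identify as ``the entire difficulty'' --- is never actually carried out. You describe what an insertion gadget must achieve (completeness, acyclicity, the independence of linking arcs, and an $O(n^2)$ budget per leaf) and then state that reconciling completeness with the budget ``is the delicate point,'' without resolving it. That is the whole theorem. There is also a specific obstacle your sketch glosses over: ``displaying'' a tree involves deleting arcs and suppressing degree-2 vertices, so an edge of a displayed $(n-1)$-leaf tree corresponds to a path in $N_{n-1}$ whose surviving vertices depend on which display choices were already made at earlier reticulations. Your gadget for $x_n$ must tap a subdivision point that is guaranteed to survive \emph{every} such choice, and since no two linking arcs may share an endpoint, the taps consumed by $x_n$'s gadget are unavailable to the gadgets for $x_{n+1},\dots$; you give no argument that all of this can be threaded consistently within the stated arc budget.

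The paper avoids the problem entirely with a much blunter construction: \emph{all} linking arcs run between the $n$ pendant arcs of $T$, arranged in $n-1$ layers, each layer containing one arc from $a_i$ to $a_j$ for every pair $i<j$ (so $\tfrac12 n(n-1)^2$ arcs in total, and acyclicity is immediate since every linking arc goes from a lower-indexed pendant arc to a higher-indexed one). Because every displayed tree retains all $n$ pendant arcs, the ``which attachment points survive'' issue is vacuous. Universality is then argued not by leaf insertion but by simulating the coalescent: any rooted binary $X$-tree is built by $n-1$ pairwise joins of components, and each join is realized by selecting one linking arc from the corresponding layer. If you want to salvage your approach, the fix is essentially to abandon the general insertion gadget and restrict all attachments to pendant arcs, at which point you have rediscovered the paper's construction; as written, the proposal has a genuine gap at its core.
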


\section{Concluding comments}

We end with some final comments.
\begin{enumerate}

\item
Establishing that a network is based on a tree $T$ does not necessarily mean that the evolution of the taxa under study was primarily represented by $T$ (or, indeed, on any rooted tree) with just some additional transfer events (like horizontal gene transfer, or endosymbiosis) between branches of the tree. As we have seen, hybridization networks can also be tree-based, even though they are described somewhat differently. Rather, tree-based means that one {\em can} represent evolution using a rooted tree and linking arcs, and this does not, in itself, confer  or require any particular mechanism of evolution for the taxa under study. 

	\item Our results suggest a number of further relevant questions. We have seen that a tree-based network $N$ can be based on more than one tree. However,   given a network, \emph{how many} base trees can it have?  
	\begin{enumerate}
		\item Is it possible to characterize the set of rooted binary phylogenetic trees on which $N$ can be based?  
		\item Given a tree-based network $N$ and an arbitrary rooted binary phylogenetic tree $T$, can it be decided in polynomial time whether or not $N$ is based on $T$?
		\item Is it possible that there is a network on a leaf set $X$ that is a tree-based network for \textbf{all} trees on $X$?  The answer is ``yes'' for $|X|=3$, as shown in Fig.~\ref{fig:3tbones}.

	\end{enumerate}

	\item The networks we have studied so far are required to be acyclic. However, basing a network $N$ on a tree $T$ suggests adopting a possibly stronger condition that relies on the assignment of an ordering of the vertices of $N$ to reflect  the temporal nature of vertical (tree-like) and horizontal (reticulate) evolution. More precisely, suppose that $N$ is a network based on $T$. A map $t$ from the vertices of $N$ to the real numbers (or the integers) is then a {\em strong temporal ordering} for $N$ relative to a valid support tree $T'$ derived from $T$ (i.e. one containing all the vertices of $N$), provided that $t$ satisfies the two properties:
	
	\begin{itemize}
	\item[(i)] If $(u,v)$ is any arc of $T'$, then $t(u)<t(v)$.
	\item[(ii)] If $(u,v)$ is a linking arc, then $t(u) = t(v)$.
	\end{itemize}
Condition (i) reflects the biological point that vertical evolution (a lineage persisting through time  plus speciation events) is proceeding with a natural time scale.  Condition (ii) captures the notion that reticulate evolution requires the two donor species to both be extant at some time in the past.
However, if we allow for additional species (not sampled, or perhaps now extinct, \citep{szo15}) to play a role in evolution, then Condition (ii) needs to be relaxed to the following condition:
	\begin{itemize}
	\item[(ii)$'$] If $(u,v)$ is a linking arc then $t(u) \leq t(v)$,
	\end{itemize}
When $N$ satisfies (i) and (ii)$'$, we say that $t$ is a {\em weak temporal ordering} for  $N$ relative to $T'$.

Notice that the (weak) temporal ordering condition in itself implies that $N$ must be acyclic, since if $v_1, \ldots, v_k=v_1$ is a directed cycle in a tree-based network, then
some pair of adjacent vertices in the cycle -- say $v_i$ and $v_{i+1}$ -- forms an arc of the support tree and so $t(v_i)< t(v_{i+1})$. However, since the $t$-values
of the vertices in the remainder of the path from $v_1$ to $v_k=v_1$  is non-decreasing (by (i) and (ii)$'$), this would imply that $t(v_1)<t(v_k) = t(v_1)$, which is a contradiction.
Fig.~\ref{fig:nontime} illustrates three tree-based networks that have no strong temporal ordering relative to any valid support tree.

It turns out that {\em every} acyclic network (and thus every tree-based network) has a weak temporal ordering.  To see this, note that because $N$ is an acyclic directed graph, it is possible to order the vertices $v_1, v_2, \ldots$ so that if $(v_i, v_j)$ is an arc of $N$ then
$i<j$ (Proposition 1.4.2 of \cite{ban}).   Thus if we let $t(v_i) =i$ for each $i$, we obtain a weak temporal ordering for $N$.  
In other words, if we accept the justification for relaxing temporal ordering based on the possible role of unsampled or extinct taxa in the reticulate evolution of the extant species under study, then the resulting weak temporal ordering constraint does not provide any real restriction on the class of tree-based networks.

\begin{figure}[ht]
\begin{center}
\begin{tikzpicture}

\node[circle,fill,radius=2pt,inner sep=1pt](root){};
\path (root) ++(-120:2cm)   node[circle,fill,radius=2pt,inner sep=1pt] (p) {};
\path (root) ++(-120:.8cm)  node[circle,fill,radius=2pt,inner sep=1pt] (a) {};
\path (root) ++(-120:1.3cm) node[circle,fill,radius=2pt,inner sep=1pt] (b){}; 
\path (root) ++(-60:2cm)    node[circle,fill,radius=2pt,inner sep=1pt]  (out) {};
\path (root) ++(-60:.8cm)   node[circle,fill,radius=2pt,inner sep=1pt] (d) {};
\path (root) ++(-60:1.3cm)  node[circle,fill,radius=2pt,inner sep=1pt] (c){}; 

\draw[->](b)--(d);
\draw[white, line width=1mm](a)--(c);
\draw[->](c)--(a);
\draw (p)--(root)--(out);
\node[below=1mm of p] () {$\mathstrut a$};
\node[below=1mm of out] () {$\mathstrut b$};
\node[below=3cm of root] () {$(i)$};

\begin{scope}[xshift=4cm]

\node[circle,fill,radius=2pt,inner sep=1pt](root){};
\path (root) ++(-120:2cm)   node[circle,fill,radius=2pt,inner sep=1pt] (p) {};
\path (root) ++(-120:.8cm)  node[circle,fill,radius=2pt,inner sep=1pt] (a) {};
\path (root) ++(-120:1.3cm) node[circle,fill,radius=2pt,inner sep=1pt] (b){}; 
\path (root) ++(-60:2cm)    node[circle,fill,radius=2pt,inner sep=1pt]  (out) {};
\path (root) ++(-60:.8cm)   node[circle,fill,radius=2pt,inner sep=1pt] (d) {};
\path (root) ++(-60:1.3cm)  node[circle,fill,radius=2pt,inner sep=1pt] (c){}; 

\draw[->](d)--(b);
\draw[white, line width=1mm](a)--(c);
\draw[->](a)--(c);
\draw (p)--(root)--(out);

\node[below=1mm of p] () {$\mathstrut a$};
\node[below=1mm of out] () {$\mathstrut b$};
\node[below=3cm of root] () {$(ii)$};

\end{scope}

\begin{scope}[xshift=8cm]

\node[circle,fill,radius=2pt,inner sep=1pt](root){};
\path (root) ++(-120:2cm)   node[circle,fill,radius=2pt,inner sep=1pt] (p) {};
\path (root) ++(-120:.8cm)  node[circle,fill,radius=2pt,inner sep=1pt] (a) {};
\path (root) ++(-120:.5cm) node[circle,fill,radius=2pt,inner sep=1pt] (b){}; 
\path (root) ++(-60:2cm)    node[circle,fill,radius=2pt,inner sep=1pt]  (out) {};
\path (root) ++(-60:1.6cm)   node[circle,fill,radius=2pt,inner sep=1pt] (d) {};
\path (root) ++(-60:1cm)  node[circle,fill,radius=2pt,inner sep=1pt] (c){}; 

\draw[->](b)--(d);
\draw[white, line width=1mm](a)--(c);
\draw[->](a)--(c);
\draw (p)--(root)--(out);

\node[below=1mm of p] () {$\mathstrut a$};
\node[below=1mm of out] () {$\mathstrut b$};
\node[below=3cm of root] () {$(iii)$};

\end{scope}
\end{tikzpicture}
\caption{Three networks with no strong temporal ordering relative to any valid support tree.
 Network (i) fails to be acyclic (and so is technically not even a binary phylogenetic network), but Networks (ii) and (iii) are acyclic (and so have a weak temporal ordering).}
\label{fig:nontime}
\end{center}
\end{figure}
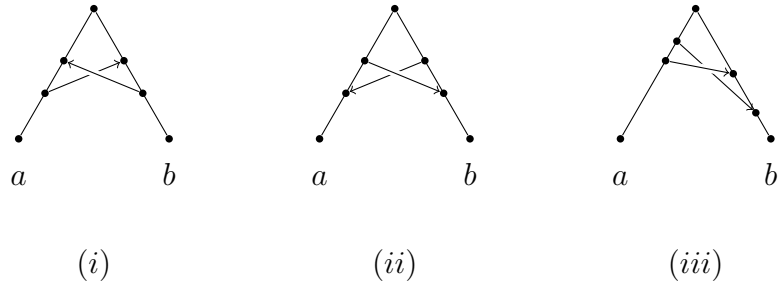

\end{enumerate}

\section{Funding} 
MS thanks the NZ Marsden Fund and the Allan Wilson Centre for helping fund this research.
ARF thanks the Australian Research Council via FT100100898 for funding this research.

\section{Acknowledgements}
We  thank Leo van Iersel for numerous helpful suggestions concerning this paper.

\newpage

\section{Appendix: Mathematical proofs}
\subsection{Proof of Theorem \ref{main}}

\begin{proof}
{\em Part (a)} Suppose that $N$ is tree-based, and let $T'$ be a support tree for $N$. Then the set $S$ of arcs of $T'$ contains $S_1$, and $S$ also satisfies conditions ($C_1$) and ($C_2$) for every vertex $v \in V$ of in-degree or out-degree 2, respectively.  Thus $S$ is admissible. 

Conversely, suppose that $S$ is an admissible subset of arcs of $N$.  Consider the network $N' = (V,S)$
consisting of all the vertices in $N$ and just the arcs in $S$.  We claim that this is a rooted tree, with root $\rho$ (the root of $N$) and leaf set $X$ (the leaf set of $N$). Firstly, notice that
$N'$ has no vertex of in-degree 2, by condition ($C_1$). Secondly,  every arc $a$ that is incoming to a leaf $x \in X$ of $N$  is present in $S'$, and so $a$ is  also an arc of $N'$ and so the leaf set of $N'$ contains $X$.  It remains to check that (1) $N'$ contains no other leaves, and (2) the only vertex of in-degree 0 in $N'$  is $\rho$.  
For (1), suppose $v$ is vertex of $N'$ that is not in $X$.  Then in $N$, $v$ has strictly positive out-degree.  If $v$ has out-degree 1 in $N$,  then the outgoing arc from $v$ is present in $S_1$ and thereby in $S$, while if $v$ has out-degree 2, at least one of the two outgoing arcs is present in $S$ by condition ($C_2$). Thus, $v$ cannot be a leaf of $N'$, establishing claim (1). 
Turning to claim (2), suppose that $v$ has in-degree 0. Then either (i)$'$ $v$ has in-degree 1 or 2 in $N$, or (ii)$'$  $v$ is the root vertex of $N$.
Now (i)$'$  cannot hold since the admissibility of $S$ implies that  at least one in-coming arc into $v$ is present in $N'$ (if $v$ has in-degree 1, then the incoming arc lies in $S_1$ and hence $S$, while if $v$ has in-degree 2, condition ($C_1$) implies that one incoming arc into $v$ is present in $N'$).  Case (ii)$'$   must now apply since every finite acyclic network has at least one arc of in-degree 0. This establishes claim (2), and thereby the ``if" direction in the first statement of Part (a). 

For the second statement of Part (a), we simply observe that the function $S \mapsto (V,S)$ from admissible subsets of $A$ to valid support trees for $N$ is a bijection since it has a (left and right) inverse in the opposite direction, namely
$T'= (V, S') \mapsto S'$.

{\em Part (b)}
We will show that any rooted phylogenetic  network can be translated directly into an instance of 2-SAT (a conjunction of clauses, each involving just two literals or their negations) in such a way that the existence of an admissible subset of arcs for $N$ corresponds to the satisfiability of the corresponding 2-SAT instance. 

Given $N$, let the set of literals be the arc set $A$, and consider the conjunction of the following clauses $C_\alpha$: 
$$C_N: =  \bigwedge_{a \in S_1} C_a \wedge  \bigwedge_{v\in V_{{\rm in}-2}} (C_v \wedge C'_v) \wedge  \bigwedge_{w \in V_{{\rm out}-2}} C_w,$$
where $V_{{\rm in}-2}$ (resp. $V_{{\rm out}-2}$)  is the set of vertices of $N$ of in-degree 2 (resp. out-degree 2) and for $a\in S_1$:
\begin{equation}
\label{eq1}
C_a = a;
\end{equation}
while for $v \in V_{{\rm in}-2}$ (with incoming arcs $a, b$):
\begin{equation}
\label{eq2}
C_v=(a \vee b) \mbox { and } C'_v = (\neg a \vee \neg b);
\end{equation}
and for $w \in V_{{\rm out}-2}$ (with outgoing arcs $a', b'$):
\begin{equation}
\label{eq3}
C_w = a' \vee b'.
\end{equation}

Notice that $C_N$ is an instance of 2-SAT (a conjunction of clauses, each of which involves just two literals or their negation),  and that if we interpret a truth assignment to $A$ as indicating whether $a\in A$ is an element of $S$ (`true') or of $A-S$ (`false')
then $C_N$ is satisfiable if and only if $N$ has an admissible subset $S$ (since the three types of clauses in (\ref{eq1})--(\ref{eq3}) respectively capture the conditions $S_1 \subseteq S$ and conditions ($C_1$) and ($C_2$) for admissibility). 

Part (b) now follows by the equivalence between admissibility and tree-based in part (a), and  the classic result (dating back to~\cite{kro}) that any instance of 2-SAT can be solved in polynomial time  (indeed in  linear time by more recent techniques \citep{asp}).
\end{proof}

\subsection{Proof of Proposition~\ref{mainpro}}

\begin{proof}
\mbox{ }
\noindent [(a) $\Leftrightarrow$ (b)] Suppose that $N$ is tree-based. Then for any tree-based representation for $N$, no two linking arcs can share the same vertex (by considering the various possible cases). Moreover, a linking arc is never incoming to an in-degree 1 vertex, or  outgoing from an out-degree 1 vertex, and so deleting linking arcs will not disconnect the network. Thus if we take $I$ to be the linking arcs in any tree-based representation for $N$ then $N'=(V, A-I)$ is an associated support tree for $N$.
Conversely, suppose that (b) holds for some set $I$. Since $N'$ is connected it has leaf set $X$, and so $N'$ is a subdivision of some rooted phylogenetic $X$-tree $T$. Now if we regard each arc in $I$ as a linking arc then we recover $N$ (since $I$ is independent, and $N'$ is connected, these arcs are all placed validly).

\bigskip

\noindent  [(a)$\Leftrightarrow$ (c)] If $N$ is tree-based, then any valid support tree $T'$ for $N$ satisfies the conditions specified in (c). 
Conversely suppose that $\tilde{T}$ is a rooted
spanning tree of $N$ (with root $\rho$) that contains the arcs in $S_1$, and has no leaves outside of $X$.  Since $\tilde{T}$ is a spanning tree it contains all vertices of $N$,
and any additional arcs in $N$ are either (i) from a vertex of out-degree 1 in $\tilde{T}$ to a vertex having in-degree and out-degree equal to  1  in $\tilde{T}$,  or
(ii) from a vertex of out-degree 0 in $\tilde{T}$ to a vertex having in-degree and out-degree equal to  1 in $\tilde{T}$; however,  case (ii)  is excluded by the assumption that $\tilde{T}$ has no leaves outside of $X$.   Thus if we let  $S$ be the set of arcs of $\tilde{T}$, then $S$ contains $S_1$, and  condition ($C_1$) holds (since $\tilde{T}$ is a tree), and condition ($C_2$) also holds from case (i). Thus $S$ is an admissible subset of arcs for $N$, and so $N$ is tree-based.
\end{proof}

\subsection{Proof of Proposition~\ref{mainpro2}}
\begin{proof}
Suppose $N$ is based on a tree. Then any antichain $\A$ of $N$ is also an antichain in any support tree $T'$ for $N$, since removing the linking arcs in returning to $T'$ from $N$ cannot create paths between vertices.
For each vertex $v \in \A$, select a leaf $x_v$ that lies below $v$ (i.e. there is a directed path from $v$ to $x_v$). Since $T'$ is a tree, these $|\A|$ are all arc-disjoint.  Moreover, the reinsertion of the linking arcs in moving from $T'$ to $N$ does not alter the  arc-disjointness of these $|\A|$ paths.

For the second claim, observe that $X$ is itself an antichain of $N$ of size $|X|$. Suppose there were an antichain $\A$ of $N$ of size strictly greater than $|X|$; we will show that this implies that $N$ is not tree-based.  Let  $\A_1$ and $\A_2$ be the sets of vertices in $\A$ that are leaves and non-leaves, respectively. Since $|\A|>|X|$ it follows that
 $|\A_2|\geq 1$. 
  Now if there were $|\A_2|$ arc-disjoint paths 
 from $\A_2$ to the leaves of $N$ then these $|\A_2|$ leaves together with $\A_1$ would comprise $|\A|$ distinct leaves. Since $|\A| >|X|$, this is not possible, and so
 $\A_2$ violates the antichain-to-leaf property, and hence $N$ is not a tree-based network. 
 \end{proof}

\subsection{Proof of Proposition \ref{prop1}}
\begin{proof}
Order the leaves of $T$ as $x_1, x_2, x_3, \ldots, x_n$.  For each $x_i$, consider the pendant arc $a_i$ of $T$ that is incident with $x_i$.
Place a linking arc from $a_i$ to $a_j$ for each  pair $i,j$ with $i< j$.  Above these $\binom{n}{2}$ arcs, place another set of $\binom{n}{2}$ linking arcs, again from  $a_i$ to $a_j$ for  each  pair $i,j$ with $i< j$. Continue this process so as to place a total of $n-1$ sets of $\binom{n}{2}$ such collections of linking arcs between the pendant arcs of $T$ to obtain a network $\widetilde{N}$ based on $T$ containing $\frac{1}{2}n(n-1)^2$ linking arcs altogether (see Fig. \ref{fig:displays-all-trees}).

\begin{figure}[ht]
\begin{center}
\includegraphics[width=8cm]{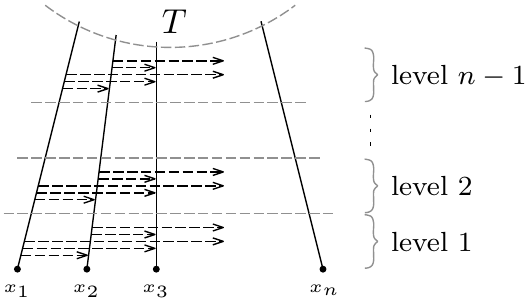}
\caption{A tree-based binary network that displays all rooted binary phylogenetic $X$-trees with $n$ leaves.}
\label{fig:displays-all-trees}
\end{center}
\end{figure}

We claim that $\widetilde{N}$ displays all rooted binary phylogenetic $X$-trees. To see this, note that any rooted binary phylogenetic $X$-tree $T'$ can be constructed by a sequence of 
$n-1$ steps of a `coalescent' process which starts with a graph of $n$ isolated leaves. At  each step this process joins two elements of the graph so-far constructed to a root vertex (the number of components of the resulting forest decreases by 1 at each step, and so we arrive at a tree after $n-1$ steps) -- for an example of this coalescent process, see Fig. 2.8 of \cite{sem}. The generous placement of the linking arcs in $\widetilde{N}$ allows for this coalescent process to be realised (for any tree $T'$) in $\widetilde{N}$. 
\end{proof}

\end{document}